\theoremstyle{plain}
\newtheorem{theorem}{Theorem}
\newtheorem{lemma}{Lemma}
\theoremstyle{definition}
\setlist[1]{itemsep=-4pt}
\begin{document}

\title{Time-space tradeoffs for two-way finite automata}

\author{
	Shenggen Zheng\\
	\small \texttt{zhengshenggen@gmail.com}\\
    \small  Sun Yat-sen University\\
	\and
   Daowen Qiu\thanks{Corresponding author: Daowen Qiu.}\\
	\small \texttt{issqdw@mail.sysu.edu.cn}\\
    \small  Sun Yat-sen University\\
\and
	Jozef Gruska\\
	\small \texttt{gruska@fi.muni.cz}\\
	\small  Masaryk University
}

\renewcommand\footnotemark{}

\date{\vspace{-5ex}}

\maketitle

\begin{abstract}
We  explore bounds of  {\em time-space tradeoffs}
 in language recognition on {\em two-way finite automata} for some special languages. We  prove:  (1) a time-space tradeoff  upper bound for recognition of the languages $L_{EQ}(n)$ on {\em two-way probabilistic finite automata} (2PFA): $TS={\bf O}(n\log n)$, whereas a time-space tradeoff lower bound on  {\em two-way deterministic finite automata} is ${\bf \Omega}(n^2)$; (2) a time-space tradeoff  upper bound for recognition of the languages $L_{INT}(n)$ on {\em two-way finite automata with quantum and classical states} (2QCFA): $TS={\bf O}(n^{3/2}\log n)$, whereas a   lower bound on  2PFA is $TS={\bf \Omega}(n^2)$; (3) a time-space tradeoff  upper bound for recognition of the languages $L_{NE}(n)$ on  exact 2QCFA: $TS={\bf O}(n^{1.87} \log n)$, whereas a   lower bound on  2PFA is $TS={\bf \Omega}(n^2)$.

It has been proved (Klauck, STOC'00) that the exact one-way quantum finite automata have no advantage comparing to classical finite automata in recognizing languages. However, the result (3) shows that  the exact 2QCFA do have an advantage in comparison with  their classical counterparts, which has been the first example showing that the exact quantum computing have advantage in time-space tradeoff comparing to classical computing.

Usually, two communicating parties, Alice and Bob, are supposed to have an access to arbitrary computational power in {\em communication complexity} model that is used.
Instead of that we will consider communication complexity in such a setting that two parties are using only finite automata and we prove in this setting that quantum automata are better than classical automata and also probabilistic automata are better than deterministic automata for some well known tasks.

\end{abstract}
\section{Introduction}
Time-space tradeoffs is an important research topic in the study of complexity  of both classical and quantum computing \cite{BC82,Klk03,KSW07} with respect to various computing models \cite{BST98,BNS92,Cbm66}.  However, in the case of two-way finite automata, there is few work on their time-space tradeoffs. Mostly only their time complexity or state complexity (space complexity) has been investigated.

When just  time complexity or  state complexity (space complexity) of two-way finite automata to recognize some languages are considered,  it seems that quantum finite automata have no advantages at all compared to their classical counterparts. However, quite surprisingly,  when  time-space product is considered, then advantages of quantum variations of the  classical models can be demonstrated as shown in this paper.

Time-space tradeoffs are closely related to communication complexity. In this paper, we will use communication complexity results to derive time-space tradeoffs results for two-way finite automata. We  prove that the time-space tradeoffs for recognizing some languages in  two-way finite automata with quantum and classical states (2QCFA) \cite{Amb02} are better than in their classical counterparts and also that probabilistic two-way quantum finite automata (2PFA) \cite{Frd81} are better than  two-way deterministic finite automata (2DFA) \cite{HU79}.

Since the topic of communication complexity was introduced by Yao  \cite{Yao79}, it has been  extensively studied \cite{Buh09,KusNis97}.
In the setting of two  parties, Alice is given  an $x\in\{0,1\}^n$, Bob is given a $y\in\{0,1\}^n$ and their task is to communicate in order to determine the value of some given Boolean function $f:\{0,1\}^n\times\{0,1\}^n\to\{0,1\}$, while exchanging as small  number of bits  as possible. In this setting, local computations of the parties
are considered to be free, but communication is considered to be  expensive and has to be minimized.
Two of the most often studied communication problems  are  that of equality and intersection \cite{KusNis97}, defined as follows:
(1) {\bf Equality}: $\mbox{EQ}(x,y)=1$ if $x=y$ and 0 otherwise.
(2) {\bf Intersection}: $\mbox{INT}(x,y)=1$  if there is an index $i$ such that $x_i=y_i=1$ and $0$ otherwise.

\subsection{Time-space tradeoffs}\label{sub-sec}
Let us consider the following language over the alphabet $\Sigma=\{0,1,\#\}$:
\begin{equation}
    L_{EQ}(n)=\{x\#^ny\mid x,y\in\{0,1\}^n, \mbox{EQ}(x,y)=1\}.
\end{equation}
It is clear that 2DFA (therefore also 2PFA) can recognize $L_{EQ}(n)$. The time complexity\footnote{When two-way finite automata are used to recognize languages, they can halt before reading all the input.} of 2DFA recognizing this language is ${\bf O}(n)$. The state complexity of 2DFA  recognizing the language is ${\bf O}(n^2)$, that is the space used is ${\bf O}(\log n)$. The time complexity and also the space used of  2PFA  recognizing the same language  is almost the same.  However, when we consider time-space tradeoff for the language  $L_{EQ}(n)$,  the situation is very different.

 We will use a 2PFA to simulate the probabilistic communication protocol from Chapter 1 of \cite{Hrc05} for the  problem $EQ$ and get an  upper bound for the  time-space tradeoff for 2PFA.

\begin{theorem}\label{Th1}
There is a 2PFA  that accepts the language $L_{EQ}(n)$ in the time $T$ using the space $S$ such that
  $TS={\bf O}(n\log n)$.
  \end{theorem}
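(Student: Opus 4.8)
The plan is to simulate, on a 2PFA, the randomized one‑way fingerprinting protocol for $\mathrm{EQ}$ given in Chapter~1 of \cite{Hrc05}. Recall that protocol: writing $\mathrm{num}(z)$ for the integer with binary representation $z\in\{0,1\}^n$, Alice picks an index $i$ uniformly at random from $\{1,\dots,n^2\}$, lets $p_i$ denote the $i$‑th prime, computes $s=\mathrm{num}(x)\bmod p_i$, and sends the pair $(i,s)$; Bob accepts iff $\mathrm{num}(y)\equiv s\pmod{p_i}$. Since $p_{n^2}={\bf O}(n^2\log n)$, the message has length ${\bf O}(\log n)$. If $x=y$ the protocol always accepts; if $x\neq y$ then $0<|\mathrm{num}(x)-\mathrm{num}(y)|<2^n$, so this difference has fewer than $n$ distinct prime divisors, hence $p_i$ divides it for fewer than $n$ of the $n^2$ choices of $i$, and the error probability is below $1/n$.

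First I would build, for each $n$, a 2PFA $M_n$ that realizes this protocol on inputs of the form $x\#^n y$. On such an input $M_n$ proceeds as follows. (i)~It flips $2\lceil\log n\rceil$ fair coins to obtain an index $i\in\{1,\dots,n^2\}$ (if the outcome falls outside this range it discards the sample and repeats; the expected number of repetitions is ${\bf O}(1)$). (ii)~It positions its head at the left end of $x$ and sweeps rightward, maintaining in its finite control the running residue $v\leftarrow 2v+x_j\ (\mathrm{mod}\ p_i)$ by Horner's scheme; this costs at most $p_i={\bf O}(n^2\log n)$ control values for $v$, and the index $i$ (equivalently, the prime $p_i$) is simply part of the current state, so no computation is needed to ``know'' it. (iii)~It continues rightward across the $\#^n$ block, carrying the value $v$ unchanged in the control. (iv)~It sweeps across $y$, similarly accumulating $w\leftarrow 2w+y_j\ (\mathrm{mod}\ p_i)$ while still holding $v$. (v)~On reaching the right endmarker it accepts iff $v=w$, and rejects otherwise.

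By construction $M_n$ errs with probability $<1/n<1/3$ on every input, so it recognizes $L_{EQ}(n)$ with bounded error. Its number of states is $\sum_{i=1}^{n^2}{\bf O}(p_i^2)=\poly(n)$ — the $p_i^2$ factor because during step~(iv) the control holds the pair $(v,w)$, and a constant‑factor overhead covers head position, the coin‑flip phase, and bookkeeping — so the space used is $S={\bf O}(\log n)$. The running time is ${\bf O}(\log n)$ for step~(i) plus three sweeps of a tape of length $\Theta(n)$ in steps~(ii)--(iv), i.e.\ $T={\bf O}(n)$. Hence $TS={\bf O}(n\log n)$, as claimed.

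The one point that needs care is the apparent cost of ``working modulo a random prime'': generating a prime at runtime would be too slow to keep $T={\bf O}(n)$. The resolution is to exploit the nonuniformity of the model (one automaton per input length, exactly as in the ${\bf O}(n^2)$ state bound quoted for 2DFA above): the primes $p_1,\dots,p_{n^2}$ are hard‑wired into $M_n$, so selecting one costs only the ${\bf O}(\log n)$ coin flips for its index, after which tracking $\mathrm{num}(x)\bmod p_i$ and $\mathrm{num}(y)\bmod p_i$ is just a finite counter run in a single left‑to‑right pass. The remaining details — rejection‑sampling $i$ uniformly, and transporting the ${\bf O}(\log n)$‑bit partial residue across the $\#^n$ gap inside the finite control — are routine.
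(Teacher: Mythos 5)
Your proposal is correct and follows essentially the same route as the paper: both simulate the Hromkovi\v{c} fingerprinting protocol for $\mathrm{EQ}$ by hard-wiring ${\bf O}(n^2)$ candidate primes into the nonuniform automaton, choosing one at random, tracking $\mathrm{num}(x)\bmod p$ and $\mathrm{num}(y)\bmod p$ in the finite control across a single pass, and comparing the residues at the right endmarker, yielding $S={\bf O}(\log n)$, $T={\bf O}(n)$, and error $o(1)$. The only cosmetic difference is that you sample the $i$-th prime for a uniform index $i\le n^2$ while the paper samples a uniform prime below $n^2$ and invokes the prime number theorem for the error bound; both give the claimed tradeoff.
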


 Using communication complexity lower bound proof method \cite{KusNis97},   we can get the lower bound for time-space tradeoff for 2DFA.
  \begin{theorem}\label{Th2}
Let ${\cal A}$ be a 2DFA that accepts the language $L_{EQ}(n)$ in time $T$ using space $S$.
Then, $TS={\bf \Omega}(n^2)$.
\end{theorem}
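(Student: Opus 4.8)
The plan is to cast a computation of $\mathcal{A}$ as the transcript of a deterministic two-party communication protocol for $\mathrm{EQ}$ and then invoke the standard lower bound $D(\mathrm{EQ}_n)\ge n$ (the fooling-set / rectangle argument of \cite{KusNis97}). Write the input as $\vdash x\,\#^n\,y\dashv$; cells $n{+}1,\dots,2n$ carry the block $\#^n$, which contains $n{-}1$ internal boundaries $b_1,\dots,b_{n-1}$ between consecutive $\#$'s. For a fixed $b_i$, call cells $0,\dots,n{+}i$ the \emph{left part} (its content is $\vdash x\#^{i}$, depending only on $x$) and cells $n{+}i{+}1,\dots$ the \emph{right part} (content $\#^{\,n-i}y\dashv$, depending only on $y$). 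Because $\mathcal{A}$ is deterministic and moves the head by at most one cell per step, the \emph{crossing sequence} at $b_i$ — the list of control states in which the head crosses $b_i$, in order — yields a protocol: Alice (holding $x$) runs the simulation while the head is in the left part, Bob (holding $y$) while it is in the right part, and at each crossing the active party sends the current control state using $\lceil\log m\rceil$ bits, where $m=2^{O(S)}$ is the number of states; when $\mathcal{A}$ halts, the active party announces accept/reject. This protocol computes $\mathrm{EQ}(x,y)$, and on input $(x,y)$ it communicates $(\text{number of crossings of }b_i)\cdot\lceil\log m\rceil+O(1)$ bits.

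First I would pick a cheap boundary. Restrict to the $2^n$ accepted ``diagonal'' inputs $w_x:=x\,\#^n\,x$. Since each of the $\le T$ steps crosses at most one boundary, $\sum_{x}\sum_{i=1}^{n-1}(\text{crossings of }w_x\text{ at }b_i)\le 2^{n}T$; averaging over $i$ gives a boundary $b^\star=b_{i^\star}$ with $\sum_x(\text{crossings of }w_x\text{ at }b^\star)\le 2^{n}T/(n-1)$, and Markov's inequality then produces a set $X\subseteq\{0,1\}^n$ with $|X|\ge 2^{n-1}$ on which $w_x$ crosses $b^\star$ at most $4T/n$ times. Hence the protocol built from $\mathcal{A}$ and $b^\star$ communicates only $O\!\big(\tfrac{T}{n}S\big)$ bits on every diagonal input $(x,x)$ with $x\in X$.

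Next I would finish via the rectangle property. If $x\ne x'$ both lie in $X$ and $(x,x),(x',x')$ yield the same transcript, then the inputs yielding that transcript form a combinatorial rectangle, so $(x,x')$ yields it too, forcing the protocol to output $\mathrm{EQ}(x,x)=1$ on $(x,x')$ and contradicting $\mathrm{EQ}(x,x')=0$. Thus $x\mapsto(\text{transcript on }(x,x))$ is injective on $X$, so $2^{n-1}\le |X|\le(\text{number of transcripts of length }O(\tfrac{T}{n}S))\le 2^{O(TS/n)}$. Taking logarithms gives $n-1={\bf O}\!\big(\tfrac{T}{n}S\big)$, i.e.\ $TS={\bf \Omega}(n^2)$. (We may assume $T\ge 2n$, since otherwise the head never reaches the cells holding $y$ and $\mathcal{A}$ cannot be correct, so the additive $O(1)$ term is harmless.)

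The main obstacle is the bookkeeping that turns the two-way deterministic computation into a bona fide protocol: one must argue that, once $b^\star$ is fixed, the head's excursions on the left of $b^\star$ depend only on $x$ and on the states in which it last re-entered from the right (and dually on the right), so that the two simulators never need any information beyond the handed-over control state — this is exactly the classical crossing-sequence ``cut and paste'' fact for two-way automata, but it has to be stated cleanly (including the alternation of head directions and the handling of the halting configuration on either side) to be invoked here. The remaining ingredients — the averaging to locate $b^\star$, counting transcripts, the fooling-set lower bound for $\mathrm{EQ}$, and the innocuous relation between $S$ and $\log m$ — are routine.
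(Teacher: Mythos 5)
Your argument is correct, but it executes the reduction to communication complexity differently from the paper. The paper's protocol hands the head back and forth between an ``$x$-region'' player and a ``$y$-region'' player, with the $\#$-block simulated by whoever currently holds the head; the key observation is that between any two consecutive messages the head must traverse the entire length-$n$ block of $\#$'s, so the protocol has \emph{worst-case} communication at most $S\cdot T/n$ on every input, and the black-box bound $D(\mathrm{EQ})=\Omega(n)$ finishes the proof in one line. You instead fix a single cut $b^\star$ inside the $\#$-block; since a single boundary can in principle be crossed in one step, you are forced to average over the $n-1$ candidate cut positions and over the diagonal inputs to locate a cut crossed only $O(T/n)$ times for at least half of the $x$'s. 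Because your communication bound then holds only on a large subset of the diagonal rather than in the worst case, you cannot invoke $D(\mathrm{EQ})=\Omega(n)$ as a black box, and you correctly substitute a direct fooling-set/transcript-counting argument (injectivity of $x\mapsto$ transcript of $(x,x)$ on $X$, then $2^{n-1}\le 2^{O(TS/n)}$). Both routes are sound and yield the same bound. The paper's version is shorter because it exploits the geometry of the padding (every hand-over is forced to cost $n$ time steps); yours is the standard crossing-sequence/averaging technique, which is somewhat more robust---it needs only that there are $\Omega(n)$ candidate cut positions and that each step crosses at most one, not that each individual crossing is slow---at the price of the extra averaging, Markov, and counting steps. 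The one point you flag as an obstacle (the cut-and-paste bookkeeping turning a two-way computation into a protocol with alternating speakers) is indeed the classical crossing-sequence fact and is not a gap.
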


In order to prove the time-space tradeoffs advantages of 2QCFA compared to 2PFA,
let us consider the following language over the  alphabet $\Sigma=\{0,1,\#\}$:
\begin{equation}
    L_{INT}(n)=\{x\#^ny\mid x,y\in\{0,1\}^n, \mbox{INT}(x,y)=1\}.
\end{equation}

We use a 2QCFA to simulate the quantum communication protocol from \cite{Buh98,Buh09} for the problem $\mbox{INT}$ and get an  upper bound for the  time-space tradeoff for 2QCFA.

\begin{theorem}\label{Th3}
There is a 2QCFA that accepts the language $L_{INT}(n)$ in time $T$ using space $S$ such that
  $TS={\bf O}(n^{3/2}\log n)$.
  \end{theorem}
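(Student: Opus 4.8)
The plan is to build a 2QCFA that implements the $O(\sqrt{n}\log n)$-qubit quantum communication protocol for $\mathrm{INT}$ (the protocol obtained by running Grover search on the indices $i$ with $x_i=y_i=1$, with the search driven back and forth between Alice's and Bob's halves of the input). The quantum register of the protocol has dimension $2^{O(\log n)}=\mathrm{poly}(n)$, so a 2QCFA with $O(\log n)$ qubits can hold it; the classical part of the 2QCFA carries the loop counters and head-control that Alice and Bob would each maintain locally. The head of the 2QCFA plays the role of the communication channel: to apply ``Alice's'' part of a Grover iteration it parks on the $x$-block and sweeps across it, using the current tape symbol to decide how to rotate the quantum register, and to apply ``Bob's'' part it crosses the $\#^n$ gap and sweeps across the $y$-block similarly. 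Thus one round of the communication protocol costs $O(n)$ head moves.

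Concretely, I would proceed as follows. First, recall (or restate) the quantum protocol for $\mathrm{INT}$ on inputs of length $n$: it uses $O(\sqrt n)$ Grover iterations, each iteration requiring a constant number of exchanges of an $O(\log n)$-qubit message, for total communication $O(\sqrt n \log n)$ qubits, with bounded one-sided (or two-sided, to be converted) error. Second, describe the 2QCFA $\mathcal{A}$: its quantum register has $\lceil\log_2(\text{protocol dimension})\rceil = O(\log n)$ qubits, hence space $S = O(\log n)$; its classical states store which Grover iteration we are in (a counter up to $O(\sqrt n)$, costing $O(\log n)$ classical bits, absorbed into $S$), which phase of the iteration (Alice-oracle, diffusion, Bob-oracle, etc.) is being executed, and the bookkeeping needed to walk the head. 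Third, specify the transition function: in an ``Alice-oracle'' phase the head scans $x$ left-to-right; at position $i$ with symbol $x_i$ the machine applies the controlled rotation/reflection that the protocol's oracle $O_x$ applies to basis state $|i\rangle$ (this is a fixed unitary on the $O(\log n)$ qubits, independent of $n$ beyond its indexing); similarly for the ``Bob-oracle'' phase on $y$; the diffusion step is input-independent and can be applied at the endmarker. Between an Alice-phase and a Bob-phase the head must traverse the $\#^n$ separator, which is where the counter in the classical control is needed to know it has reached the other block. Fourth, count: each of the $O(\sqrt n)$ iterations moves the head $O(n)$ times, so $T = O(n^{3/2})$ (the extra $\log n$ in the statement's $TS$ bound comes from the space factor, not the time). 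After the last iteration the protocol's measurement is simulated by a measurement of the quantum register, and $\mathcal{A}$ accepts or rejects accordingly; amplitude amplification / repetition (already folded into the protocol, or done by an outer classical loop adding another $O(1)$ or $O(\log n)$ factor that the statement tolerates) brings the error below the required threshold. Multiplying, $TS = O(n^{3/2}) \cdot O(\log n) = O(n^{3/2}\log n)$.

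The main obstacle is the faithful simulation of the quantum communication protocol by a \emph{single} device with a \emph{fixed-size} quantum register whose unitaries do not depend on the whole input at once. In the communication model Alice applies $O_x$ ``for free'' knowing all of $x$; here the 2QCFA sees only one symbol $x_i$ at a time, so I must argue that $O_x = \prod_i (\text{local controlled unitary reading } x_i)$ — i.e. that the oracle factors as a product of commuting single-symbol-controlled reflections on the index register, which it does because $O_x|i\rangle = (-1)^{x_i}|i\rangle$ acts diagonally. I also need the index register to be addressable by head position: the natural encoding is to let the basis state $|i\rangle$ of the protocol correspond to ``head is over the $i$-th cell'', but a 2QCFA keeps its head position \emph{classically}, not in superposition, so instead the machine must carry the whole index register as $O(\log n)$ genuine qubits and, while the head sits at classical position $i$, apply the $i$-dependent gate controlled on those qubits. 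Checking that this controlled gate is a legitimate 2QCFA operation (a single unitary from the finite instruction set, parameterized only by the current classical state and tape symbol) is the delicate point; it works because at classical position $i$ the required gate depends on $i$ only through comparisons the classical control already tracks, and on $x_i$ which is the current symbol.

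Finally, I would handle the routine pieces: padding/edge cases at the two endmarkers and at the $\#^n$ block (used to synchronize the start of each Grover round and to reset head-walk counters), verification that rejecting inputs not of the form $x\#^n y$ costs only $O(n)$ time and $O(\log n)$ space, and the error-probability accounting (the protocol has bounded error; an $O(1)$ number of independent repetitions, or the built-in amplitude amplification, suffices, and none of this changes the $O(n^{3/2}\log n)$ order). I expect the write-up to spend most of its length on the explicit transition table realizing one Grover iteration and on the claim that the oracle-as-product-of-local-gates decomposition is exactly what a left-to-right head sweep computes; the asymptotics themselves are immediate once that simulation is in place.
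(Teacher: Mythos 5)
Your proposal is correct and follows essentially the same route as the paper: the paper first proves a general simulation lemma (a $t$-query algorithm with $l$ quantum basis states becomes a 2QCFA with ${\bf O}(l)$ quantum states, ${\bf O}(n^2)$ classical states and ${\bf O}(tn)$ time, the oracle being exactly your product of commuting per-symbol controlled phases applied during a left-to-right head sweep), then a second lemma realizing $O_{x\wedge y}$, and then runs Grover to get $T={\bf O}(n^{3/2})$, $S={\bf O}(\log n)$. The one place your write-up is looser than it should be is the oracle for the AND of the two halves: $(-1)^{x_i\wedge y_i}$ is \emph{not} a product of a phase read off $x$ alone and a phase read off $y$ alone, so the machine needs the auxiliary-qubit trick of the protocol you invoke -- a sweep over $x$ writing $x_i$ into one extra qubit, a phase sweep over $y$ controlled on that qubit and $y_i$, and an uncomputing sweep back over $x$ -- but since you explicitly defer to that protocol this is a presentational rather than a substantive gap.
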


  Buhrman et al.~\cite{Buh98} reduced certain quantum communication tasks  to computation problems, which is essentially a way to  transform quantum query algorithms
 to quantum communication protocols. More exactly, they showed that if there is a $t$-query quantum algorithm computing an $n$-bit Boolean function $f$ with an error $\varepsilon$, then there is a communication  protocol with ${\bf O}({t\log n})$ communication for the function $f(x\wedge y)$ with the same  error $\varepsilon$.

 The main idea in the   proofs of our main results  is to transform   quantum query algorithms
 and quantum communication protocols to algorithms for 2QCFA.

   Using one of communication complexity lower bound proof methods,   we can get the following lower bound for the time-space tradeoff for the language $L_{INT}(n)$  on 2PFA.
   \begin{theorem}\label{Th4}
Let ${\cal A}$ be a 2PFA that accepts the language $L_{INT}(n)$ in time $T$ using space $S$.
Then, $TS={\bf \Omega}(n^2)$.
\end{theorem}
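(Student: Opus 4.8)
The plan is to run the standard ``crossing-sequence'' reduction from two-way automata to two-party communication against a hard input distribution, so that a fast and space-bounded 2PFA for $L_{INT}(n)$ turns into a cheap private-coin protocol for $\mbox{INT}_n$ and the classical $\Omega(n)$ communication lower bound forces $TS={\bf \Omega}(n^2)$. On inputs of the form $x\#^n y$ with $x,y\in\{0,1\}^n$, the automaton ${\cal A}$ computes exactly $\mbox{INT}(x,y)$, which is the complement of set disjointness $\mbox{DISJ}_n$ and hence has the same two-party communication complexity. Combining the classical fact that every bounded-error randomized protocol for $\mbox{DISJ}_n$ communicates ${\bf \Omega}(n)$ bits \cite{KusNis97} with Yao's minimax principle, I fix a distribution $\mu$ on $\{0,1\}^n\times\{0,1\}^n$ under which every deterministic protocol for $\mbox{INT}_n$ with error at most $1/3$ uses ${\bf \Omega}(n)$ bits. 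Let $c$ be the number of states of ${\cal A}$, so $c=2^{{\bf O}(S)}$ and a state is described by ${\bf O}(S)$ bits. First I would amplify ${\cal A}$: running it a constant number of times in succession and taking a majority vote costs only an additive ${\bf O}(1)$ in space and a constant factor in expected time, so I may assume (renaming $T,S$) that ${\cal A}$ errs with probability at most $1/10$ on every input, runs in expected time ${\bf O}(T)$, and uses ${\bf O}(S)$ space.

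Next I would locate a low-traffic cut point. For $1\le j\le n-1$ let $b_j$ be the boundary between tape cells $n+j$ and $n+j+1$; both lie strictly inside the $\#^n$ block, so the tape contents to the left of $b_j$ depend only on $x$ and those to the right only on $y$. Let $C_j(w)$ be the random number of times the head of ${\cal A}$ crosses $b_j$ on input $w$. Since the head moves at most one cell per step, $\sum_{j=1}^{n-1}C_j(w)$ never exceeds the running time, so $\sum_{j=1}^{n-1}\E[C_j(w)]={\bf O}(T)$ for every $w$; averaging over $w\sim\mu$ and over $j$, some fixed index $j^{*}$ satisfies $\E_{w\sim\mu,\,\text{coins}}[C_{j^{*}}(w)]={\bf O}(T/n)$, and Markov's inequality lets me choose $K={\bf \Theta}(T/n)$ with $\Pr_{w\sim\mu,\,\text{coins}}[C_{j^{*}}(w)>K]\le 1/10$. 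Likewise, for $T'={\bf \Theta}(T)$ large enough, $\Pr_{w\sim\mu,\,\text{coins}}[{\cal A}\text{ runs more than }T'\text{ steps}]\le 1/10$.

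Now I would build the protocol $P$ at the cut $b_{j^{*}}$. Alice, holding $x$, simulates ${\cal A}$ while its head is in cells $\le n+j^{*}$ (she knows $\vdash$, $x$, and the first $j^{*}$ symbols $\#$); Bob, holding $y$, simulates while the head is in cells $\ge n+j^{*}+1$; the coin tosses made while the head is on a given side are that party's private coins, so $P$ is a private-coin protocol. At each crossing of $b_{j^{*}}$ the active party sends the current state of ${\cal A}$ (${\bf O}(S)$ bits) to the other; a party aborts, and both output $0$, if the crossing count reaches $K$ or if that party has locally simulated more than $T'$ steps; when ${\cal A}$ halts the active party outputs its decision, negated so as to turn $\mbox{DISJ}$ into $\mbox{INT}$. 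The communication is ${\bf O}(KS)$ bits, and under $\mu$ the protocol errs only if ${\cal A}$ errs, if $C_{j^{*}}>K$, or if the simulation exceeds $T'$ steps, so its $\mu$-error is at most $1/10+1/10+1/10<1/3$. Fixing the coins to their optimal value yields a deterministic protocol for $\mbox{INT}_n$ with $\mu$-error $<1/3$ and communication ${\bf O}(KS)$; therefore $KS={\bf \Omega}(n)$, and since $K={\bf \Theta}(T/n)$ this gives $TS={\bf \Omega}(n^2)$.

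The main obstacle is conceptual rather than computational: the cut point $b_{j^{*}}$ must be fixed in advance, independently of $x$ and $y$, since Alice and Bob cannot choose it after seeing their inputs — a per-input minimum over $j$ of the crossing count is useless here — which is precisely why the argument is carried out against the hard distribution $\mu$ and $j^{*}$ is extracted by an averaging step followed by Markov's inequality. A secondary point requiring care is that a 2PFA need not halt within any fixed worst-case time and could even loop forever without ever recrossing $b_{j^{*}}$, so a separate time-truncation at $T'$ (not merely the crossing-truncation at $K$) is needed to keep $P$ well-defined; both truncations cost only a constant fraction of the error budget because $T$ bounds the \emph{expected} running time. Everything else --- encoding a state in ${\bf O}(S)$ bits, the constant-factor amplification, and the equivalence of $\mbox{INT}_n$ and $\mbox{DISJ}_n$ for communication --- is routine.
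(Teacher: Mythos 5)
Your proof is correct, and it reaches the same destination as the paper's --- a reduction from the 2PFA to a two-party protocol for $\mbox{INT}$ followed by the bound $R(\mbox{INT})={\bf \Omega}(n)$ of Razborov --- but the key counting step is done differently. The paper (whose proof of Theorem~\ref{Th4} simply defers to the proof of Theorem~\ref{Th2}) lets Alice own the ``$x$-region'' together with the ``$\#$-region'' and Bob own the ``$y$-region'' together with the ``$\#$-region'', so that every handover forces the head to traverse the entire width-$n$ block of $\#$'s; this gives a \emph{deterministic, per-input} bound of $T/n$ on the number of messages with no averaging, no hard distribution, and no truncation. You instead fix a single boundary $b_{j^*}$ inside the $\#$-block, locate it by averaging the crossing counts over the $n-1$ candidate boundaries against a hard distribution $\mu$ obtained from Yao's principle, and control the tail with Markov plus a time cutoff. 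Your route is the standard, more general crossing-sequence argument: it would survive even if the separator were a single $\#$, and it is the honest way to handle the fact that a 2PFA's time bound is naturally an \emph{expected} time (a point the paper's one-line proof glosses over; your truncation at $T'$ and the constant-factor amplification are exactly what is needed to make that rigorous). What the paper's argument buys in exchange is brevity: the $\#^n$ padding was put into $L_{INT}(n)$ precisely so that the $T/n$ message bound holds pointwise, making the distributional machinery unnecessary. Both proofs are sound; yours is strictly more robust, the paper's is strictly shorter.
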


Concerning the exact computing mode, Klauck \cite{Kla00} proved, for any regular language $L$,  that the state complexity  of the exact one-way quantum finite automata (1QFA) for $L$ is not less than the state complexity of an equivalent one-way deterministic finite automata (DFA). That means that the exact 1QFA have no advantage in recognizing regular languages.
It is therefore of interest to consider the case of  two-way finite automata.
 We still do not know whether there is time complexity or  state complexity advantages for two-way quantum finite automata in  recognition of languages.  However, we prove that exact 2QCFA do have  time-space tradeoff  advantages for recognizing some special languages.

Let us consider the sequence of functions studied in \cite{Amb13}.
We define  the
function $NE(x_1, x_2, x_3)$ as follows:
 $\mbox{NE}(x_1, x_2, x_3) = 0$ if $x_1 = x_2 = x_3$ and
   $\mbox{NE}(x_1, x_2, x_3) = 1$ otherwise.
Define
\begin{itemize}
  \item $\mbox{NE}^0(x_1)=x_1$ and
  \item $
     \mbox{NE}^d(x_1,\dots,x_{3^d})=\mbox{NE}(  \mbox{NE}^{d-1}(x_1,\ldots,x_{3^{d-1}}), \linebreak[0]  \mbox{NE}^{d-1}(x_{3^{d-1}+1},\linebreak[0]\ldots,x_{2\cdot3^{d-1}}), \linebreak[0] \mbox{NE}^{d-1} \linebreak[0] (x_{2\cdot3^{d-1}+1}, \linebreak[0] \ldots,x_{3^{d}}))
   $  for all $d>0$.
\end{itemize}

Let $n=3^d$. We define
$
\mbox{RNE}(x,y)=\mbox{NE}^d(x_1\wedge y_1,\ldots,x_n\wedge y_n),
 $
   where $x,y\in\{0,1\}^n$, and let us
 consider the following language
\begin{equation}
    L_{NE}(n)=\{x\#^ny\mid x,y\in\{0,1\}^n, \mbox{RNE}(x,y)=1\}.
\end{equation}

We will use a 2QCFA to simulate the quantum communication protocol from \cite{Amb13} for the problem $\mbox{RNE}$ and get an  upper bound for the  time-space tradeoff for 2QCFA.
\begin{theorem}\label{Th5}
There is an exact 2QCFA that accepts the language $L_{NE}(n)$ in time $T$ using space $S$ such that
  $TS={\bf O}(n^{1.87} \log n)$.
  \end{theorem}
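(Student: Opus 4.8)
The plan is to follow the template of Theorems~\ref{Th1} and~\ref{Th3}: take the exact quantum protocol underlying the separation for $\mathrm{RNE}$ and have a 2QCFA run it, using an $O(\log n)$-qubit workspace as the protocol's message register and paying $O(n)$ time every time the head must cross the $\#^n$ block to ``deliver a message''. Concretely, I would start from the exact quantum query algorithm for $\mathrm{NE}^d$ of~\cite{Amb13}, which with zero error computes $\mathrm{NE}^d(z_1,\dots,z_n)$, $n=3^d$, using only $t=O(n^{\alpha})$ queries for some $\alpha<0.87$ (in fact $\alpha\approx0.8675$), far fewer than the $n$ queries any deterministic algorithm requires. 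Combining this with the standard ``$x\wedge y$''-query simulation of~\cite{Buh98,Buh09} gives an exact quantum protocol that evaluates $\mathrm{RNE}(x,y)=\mathrm{NE}^d(x_1\wedge y_1,\dots,x_n\wedge y_n)$ with $O(t\log n)$ communication, and it is this object that the 2QCFA will imitate.

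First the automaton performs a deterministic format check — verifying that the input reads $x\#^n y$ with $x,y\in\{0,1\}^n$ and the number of $\#$'s equal to $|x|=|y|$ — by a two-way pass maintaining $O(\log n)$-bit counters for the three block lengths; this costs $O(n)$ time and $O(\log n)$ space, and on failure the automaton rejects. On a well-formed input it enters the simulation phase. Its $O(\log n)$-qubit quantum register holds the protocol's message register: an index register on $\lceil\log n\rceil+O(1)$ qubits plus $O(1)$ answer/scratch qubits; its $O(\log n)$-bit classical state handles bookkeeping (a round counter, and the head coordinate during a sweep). Between queries it applies directly to this register the unitaries of the $\mathrm{NE}^d$ algorithm, which, by that algorithm's recursive $\mathrm{NE}_3$-gate structure, act on $O(\log n)$ qubits and are implementable within the time budget. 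A single query to $z_i=x_i\wedge y_i$ with $i$ held in superposition is realized by a constant number of unidirectional sweeps: one sweep of the $x$-block that, at head position $p$, coherently performs the index-controlled operation ``if the index register equals $p$ then flip a scratch qubit $a$'' exactly when the symbol $x_p$ under the head is $1$, so that afterwards $a=x_i$; one sweep of the $y$-block that analogously performs $b\mathrel{\oplus}=a$ conditioned on the index being $p$ and $y_p=1$, giving $b\mathrel{\oplus}=x_i\wedge y_i$; and one more sweep of the $x$-block to restore $a$ to $|0\rangle$. Since each $x_p,y_p$ is a fixed classical bit read under the head, every one of these controlled maps is a genuine unitary, so the query is implemented coherently and exactly, at a cost of $O(n)$ time (a constant number of traversals of the input, each input symbol touched $O(1)$ times). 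At the end the automaton measures the answer qubit and accepts iff it reads $1$.

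Correctness and the bound then drop out. Because the query algorithm is exact and each unitary and measurement of the 2QCFA faithfully implements the corresponding step of that algorithm on the corresponding qubits, the automaton accepts $x\#^n y$ with probability exactly $1$ when $\mathrm{RNE}(x,y)=1$ and with probability exactly $0$ otherwise (ill-formed inputs having already been rejected), so it is an exact 2QCFA for $L_{NE}(n)$. Its space is $S=O(\log n)$ — the index register, $O(1)$ further qubits, and the classical counters — and its running time is $T=O(n)+t\cdot O(n)=O(n^{\alpha}\cdot n)=O(n^{1.87})$, whence $TS=O(n^{1.87}\log n)$.

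I expect the crux to be the exact, coherent simulation of the query oracle by head sweeps: one must check that the ``compare the index register against the current head coordinate'' step (whether done by carrying a classical position counter or by cyclically decrementing the index register and testing against a fixed value) truly realizes $|i\rangle|b\rangle\mapsto|i\rangle|b\oplus z_i\rangle$ without any residual phase, that the index register is left unshifted and the scratch qubit returned exactly to $|0\rangle$, and that the non-query unitaries of the $\mathrm{NE}^d$ algorithm fit into the $O(n^{1.87})$ time budget when executed on an $O(\log n)$-qubit register. The format check and bookkeeping are routine; the genuinely new ingredient — the exact $O(n^{\alpha})$-query algorithm for $\mathrm{NE}^d$ with $\alpha<0.87$ — is imported from~\cite{Amb13}.
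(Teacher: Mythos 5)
Your proposal is correct and matches the paper's proof: the paper likewise combines its general query-algorithm-to-2QCFA simulation (Theorem \ref{Query-2QCFA}) with the three-sweep, one-auxiliary-qubit implementation of the $z=x\wedge y$ oracle (Lemma \ref{Int-2QCFA}) to run Ambainis' exact ${\bf O}(n^{0.87})$-query algorithm, giving $T={\bf O}(n^{1.87})$, $S={\bf O}(\log n)$. Your bit-flip version of the oracle versus the paper's phase oracle is an immaterial variation.
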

   \begin{theorem}\label{Th6}
Let ${\cal A}$ be a 2PFA that accepts the language $L_{NE}(n)$ in time $T$ using space $S$.
Then, $TS={\bf \Omega}(n^2)$.
\end{theorem}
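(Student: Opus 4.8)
The plan is to follow the same route as the proof of Theorem~\ref{Th4}: convert a space-$S$, time-$T$ 2PFA $\mathcal{A}$ for $L_{NE}(n)$ into a bounded-error two-party protocol for $\mathrm{RNE}$, so that a randomized communication lower bound for $\mathrm{RNE}$ yields the time-space tradeoff. Split the input $x\#^n y$ so that Alice holds $x$ together with the left portion of the $\#^n$-block and Bob holds $y$ together with the right portion; since the $\#$'s are public, each party can faithfully simulate $\mathcal{A}$ on its side of a chosen boundary, using public coins for $\mathcal{A}$'s random choices. The middle block supplies $n+1$ candidate cut positions; whenever the head of $\mathcal{A}$ crosses the chosen cut the active party transmits the current state of $\mathcal{A}$ (cost $O(S)$ bits), and whichever party sees $\mathcal{A}$ halt announces the output. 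As $\mathcal{A}$ runs in (expected) time $T$, the head makes at most $T$ moves in expectation, so the total number of crossings summed over all $n+1$ cuts is at most $T$; averaging over the cuts — against a fixed hard input distribution $\mu$ for $\mathrm{RNE}$, and over $\mathcal{A}$'s coins — produces a single cut $c^{\ast}$ with expected crossing count $O(T/n)$. The induced protocol then has expected communication $O(TS/n)$ under $\mu$, and by Markov's inequality it remains correct and cheap on a $(1-\delta)$-fraction of $\mu$, so an $\Omega(n)$ lower bound on the (robust) distributional communication complexity of $\mathrm{RNE}_n$ forces $TS/n=\Omega(n)$, i.e. $TS=\Omega(n^2)$.

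It thus remains to show that the bounded-error randomized communication complexity of $\mathrm{RNE}_n$ is $\Omega(n)$, and for this I would exhibit a restriction of $\mathrm{RNE}_n$ equal to the \emph{unique-disjointness} function $\mathrm{UDISJ}_n$ (the promise version of set-disjointness in which $|\{i:x_i=y_i=1\}|\le 1$). Writing $z_i=x_i\wedge y_i$, recall $\mathrm{RNE}(x,y)=\mathrm{NE}^d(z_1,\dots,z_n)$ with $n=3^d$. Two elementary structural facts about the read-once ternary formula $\mathrm{NE}^d$ suffice: (i) on the all-zero input every $\mathrm{NE}$-gate outputs $0$, so $\mathrm{NE}^d(\vec 0)=0$; and (ii) on a weight-one input $e_j$, every gate on the root-to-leaf-$j$ path sees the pattern $(1,0,0)$ and hence outputs $1$, while every gate whose subtree is all-zero outputs $0$, so $\mathrm{NE}^d(e_j)=1$. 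Therefore, on the $\mathrm{UDISJ}$ promise, $\mathrm{RNE}(x,y)=\mathrm{NE}^d(z)$ is $0$ when $x,y$ are disjoint and $1$ when they intersect (necessarily in a single coordinate); that is, $\mathrm{RNE}_n$ restricted to this promise is exactly $\mathrm{UDISJ}_n$. Since $\mathrm{UDISJ}_n$ has randomized communication complexity $\Omega(n)$ — and, crucially, a distributional $\Omega(n)$ lower bound under Razborov's distribution that is robust against ignoring a constant fraction of inputs — the same bound holds for $\mathrm{RNE}_n$, which finishes the proof.

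The part I expect to be delicate is the bookkeeping in the probabilistic step. Because $\mathcal{A}$ is randomized, the ``cheap'' cut $c^{\ast}$ cannot be selected per input but only on average against a fixed distribution, so one must commit to $\mu$ being the hard distribution for $\mathrm{RNE}$ (equivalently for $\mathrm{UDISJ}$) \emph{before} choosing the cut, and then absorb the constant-fraction loss from Markov's inequality by invoking the distributional lower bound in its robust form. One also has to fix the meaning of ``runs in time $T$'' as expected running time and, if needed, first truncate runs longer than $O(T)$ steps at the cost of a small additive error so that the crossing-count estimate applies. Everything else — the public-coin cross-block simulation, the $O(S)$-bit cost of sending a state, and the two facts about $\mathrm{NE}^d$ — is routine.
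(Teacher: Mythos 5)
Your proposal is correct and its skeleton matches the paper's: simulate the 2PFA by a two-party bounded-error protocol in which a crossing of the head between the two parties' regions costs ${\bf O}(S)$ bits, bound the number of crossings by ${\bf O}(T/n)$, and invoke $R(\mbox{RNE})={\bf \Omega}(n)$. You differ in two places, one of which makes your life harder than necessary and one of which adds genuine content. First, you select a cut by averaging over the $n+1$ candidate positions inside the $\#^n$-block against a fixed hard distribution, which forces you into the distributional/Markov bookkeeping you flag as delicate. The paper avoids all of this: since any crossing between the $x$-region and the $y$-region must traverse the entire $\#^n$-separator, \emph{every} run on \emph{every} input makes at most $T/n$ crossings of the fixed natural boundary, so the simulated protocol has worst-case communication ${\bf O}(TS/n)$ and the plain bounded-error lower bound $R(\mbox{RNE})={\bf \Omega}(n)$ applies directly, with no hard distribution, no Markov step, and no need for a robust distributional bound (a truncation is needed only if $T$ is expected rather than worst-case time, which the paper does not dwell on). Second, the paper simply cites $R(\mbox{RNE})={\bf \Omega}(n)$ from Ambainis's paper, whereas you prove it by restricting to the unique-disjointness promise and checking that $\mbox{NE}^d(\vec{0})=0$ and $\mbox{NE}^d(e_j)=1$; both facts are correct (an easy induction on $d$), so your reduction to $\mathrm{UDISJ}_n$ is a valid self-contained replacement for the citation and is arguably the more informative route. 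Net: your argument is sound, but you can delete the cut-averaging machinery entirely by exploiting the width-$n$ separator that the language $L_{NE}(n)$ builds in for exactly this purpose.
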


\subsection{Communication  of finite automata}
Two communicating parties Alice and Bob are usually supposed to have unlimited computational power in communication complexity models.
However we will consider a very different setting. Namely that two parties are using only finite automata  for their internal computation. In this setting, Alice and Bob will be sending only some  states of their finite automata as messages to  each other.
At the beginning, Alice does some computation on her finite automaton, then sends a state $s$ of her  automaton to Bob.  After receiving the state $s$ from Alice, Bob does  computation with $s$ as  the starting state  on his automaton and then after some computation Bob sends his state $t$ to Alice.  Alice then resumes computation in her automaton  with the starting  state $t$ and so on.
In case Alice and Bob are using 2QCFA, they can send both quantum and classical states.

We prove that the communication complexity for  problems $\mbox{EQ}$, $\mbox{INT}$ and $\mbox{RNE}$ are  almost the same as in the case both  parties have unlimited computational  power. Namely, we show:

\begin{theorem}\label{Th7}
The probabilistic communication complexity for $\mbox{EQ}$ is ${\bf O}(\log n)$ when  parties are using 2PFA.
\end{theorem}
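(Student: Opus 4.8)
The plan is to recast the classical prime-fingerprinting protocol for $\mbox{EQ}$ --- the protocol from Chapter~1 of \cite{Hrc05} already invoked in the proof of Theorem~\ref{Th1} --- as a protocol in which Alice and Bob each run only a 2PFA. Fix $N=n^2$ and let $p_1<p_2<\cdots$ enumerate the primes; every $p_i$ with $i\le N$ is ${\bf O}(n^2\log n)$, hence has ${\bf O}(\log n)$ bits. In the protocol, Alice --- who holds $x\in\{0,1\}^n$ --- uses internal coin tosses to draw an index $i\in\{1,\dots,N\}$ uniformly, reads $x$ once while maintaining the running remainder $r\leftarrow(2r+x_j)\bmod p_i$ by Horner's scheme, and then sends the single pair $(p_i,\,x\bmod p_i)$ to Bob; Bob recomputes $y\bmod p_i$ the same way and accepts iff the two residues agree. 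I would first record the correctness: for $x=y$ acceptance is certain, and for $x\ne y$ the nonzero integer $|x-y|<2^n$ has fewer than $n$ distinct prime divisors, so a random prime among the first $N=n^2$ divides it with probability $<n/N=1/n<1/3$.

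The substance of the argument is then to check that Alice's and Bob's halves are genuinely 2PFAs whose states are ${\bf O}(\log n)$-bit objects, so that the single transmitted pair $(p_i,x\bmod p_i)$ counts as ${\bf O}(\log n)$ bits of communication. I would give Alice's automaton states of just three kinds: a constant-size coin-flipping gadget that produces $i$ by rejection sampling on ${\bf O}(\log n)$ fair bits; for each admissible prime $p$ and residue $r\in\{0,\dots,p-1\}$ a ``working'' state $\langle p,r\rangle$; and a distinguished hand-off state $s_{p,r}=\langle p,r\rangle$ entered after the last bit of $x$. These number ${\bf O}(N\cdot p_N)=\poly(n)$, the list of the first $N$ primes is hard-wired into the (polynomial-size) transition function, and each Horner step is one transition --- so Alice is a $\poly(n)$-state 2PFA and $s_{p,r}$ is an ${\bf O}(\log n)$-bit message. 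Bob's automaton, launched from the received $s_{p,r}$, is symmetric (states $\langle p,r'\rangle$ plus a residue comparison at the right endmarker) and sends nothing back, or one bit if Alice must also learn the outcome. Summing the messages gives communication ${\bf O}(\log n)$.

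I expect the only genuine obstacle to be the modeling point hiding behind all of this: a finite automaton has no built-in arithmetic, so one must argue that reduction modulo an ${\bf O}(\log n)$-bit number and ``knowledge'' of the first $n^2$ primes are legitimately expressible by a $\poly(n)$-state device --- which is exactly what the counting above supplies, once one adopts (as the paper does throughout) the convention that transmitting a state of a $\poly(n)$-state machine costs ${\bf O}(\log n)$ bits. An alternative route that sidesteps even this is to take the 2PFA built for Theorem~\ref{Th1} and cut it at the boundary between the $x$-block and the $y$-block: the part acting on $x$ is Alice, the state at the cut (which already encodes precisely $(p,x\bmod p)$) is the message, the part acting on $y$ is Bob, and the $\#^n$ segment does nothing but separate them; the error is unchanged and the message is ${\bf O}(\log n)$ bits. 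Either way one obtains probabilistic communication complexity ${\bf O}(\log n)$ for $\mbox{EQ}$ with both parties restricted to 2PFAs, matching up to constants the classical bound with unbounded local computation.
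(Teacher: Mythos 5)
Your proposal is correct and follows essentially the same route as the paper: the paper's (omitted) proof of Theorem~\ref{Th7} is exactly to extract the protocol from the 2PFA of Theorem~\ref{Th1} by cutting the computation at the boundary between the $x$-region and the $y$-region and transmitting the state $s_{p,r}$, which is an ${\bf O}(\log n)$-bit message since the automaton has $\poly(n)$ states. You have simply filled in the details the paper leaves implicit.
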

\begin{theorem}\label{Th8}
The quantum communication complexity for $\mbox{INT}$ is ${\bf O}(\sqrt{n}\log n)$ when  parties are using 2QCFA.
\end{theorem}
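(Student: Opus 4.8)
The plan is to exhibit a communication protocol between two 2QCFA --- Alice holding $x$ on her read-only tape, Bob holding $y$ --- that simulates the distributed Grover search protocol of Buhrman, Cleve and Wigderson~\cite{Buh98,Buh09} for $\mathrm{INT}=\mathrm{OR}_i(x_i\wedge y_i)$. Recall that amplitude amplification computes $\mathrm{OR}$ of $n$ bits with $\mathbf{O}(\sqrt n)$ queries, and that the BCW transformation turns one query to the bit $x_i\wedge y_i$ into $\mathbf{O}(1)$ exchanges of a register $R$ of $m+\mathbf{O}(1)$ qubits, $m=\lceil\log n\rceil$: Bob sends the index subregister of $R$ together with a fresh workspace qubit; Alice performs the partial oracle $O_x\colon|i\rangle|a\rangle\mapsto|i\rangle|a\oplus x_i\rangle$ and returns it; Bob performs the controlled copy $|i\rangle|x_i\rangle|b\rangle\mapsto|i\rangle|x_i\rangle|b\oplus(x_i\wedge y_i)\rangle$ and returns; Alice uncomputes $O_x$, restoring the workspace to $|0\rangle$. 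One party then applies the input-independent Grover diffusion $D$ to the $m$-qubit index subregister, locally. After $\mathbf{O}(\sqrt n)$ iterations $R$ is measured and $\mathrm{INT}(x,y)$ read off with bounded error. The total communication is $\mathbf{O}(\sqrt n)$ iterations times $\mathbf{O}(1)$ messages of $\mathbf{O}(\log n)$ qubits, i.e. $\mathbf{O}(\sqrt n\log n)$ qubits.

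The point to verify is that every local step above is legal for a 2QCFA of space $\mathbf{O}(\log n)$, i.e. with $\poly(n)$ classical and quantum basis states, so that a transmitted ``state'' carries the $\mathbf{O}(\log n)$-qubit register $R$ together with an $\mathbf{O}(\log n)$-bit classical tag. The only input-dependent operation Alice ever performs is $O_x$ (and its inverse, which equals $O_x$). She implements it inside her own quantum memory, which has room for $R$, by a single sweep of the $x$-region: at tape position $j$ she reads $x_j$, and if $x_j=1$ she applies the fixed, input-independent unitary $C_j$ that flips the answer qubit of $R$ conditioned on the index subregister being $|j\rangle$ --- here $j$ is carried by the head position, not by the input. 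Composing the $C_j$ over $j=1,\dots,n$ realizes exactly $O_x$ in $\mathbf{O}(n)$ steps with no communication; Bob realizes the $y$-dependent controlled copy by an analogous sweep of the $y$-region. The diffusion $D$, the initial uniform superposition on the index register, and the final measurement of $R$ are input-independent operations on an $\mathbf{O}(\log n)$-qubit register, performed by one party in $\mathbf{O}(\log n)$ of its own steps without communication. (Equivalently, one may start from the single-tape 2QCFA of Theorem~\ref{Th3} and split it at the $\#^n$ block: each of its $\mathbf{O}(\sqrt n)$ crossings of the block becomes one transmission of its $\mathbf{O}(\log n)$-qubit configuration, yielding the same protocol from the other direction.)

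The main obstacle I expect is not mathematical but a matter of pinning down the model: justifying that in the ``communication of finite automata'' setting a single message --- a state of a space-$S$ automaton --- may legitimately carry $\Theta(S)=\Theta(\log n)$ qubits, and that this is exactly the quantity being counted, so that $\mathbf{O}(\sqrt n)$ rounds indeed cost $\mathbf{O}(\sqrt n\log n)$. The genuinely mathematical care needed is in correctness: one must ensure that the Grover register $R$ is, at every moment, either held entirely inside one party's quantum memory or in transit as a message --- never split across the two automata --- so that $O_x$, the copy step, and $D$ act as honest unitaries; and that the workspace qubit is uncomputed before $D$ is applied, since garbage entangled with the index register would destroy the Grover interference, whereas only bounded-error (not exact) behaviour is claimed here. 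Both are handled by the standard compute--copy--uncompute discipline at the cost of a constant factor in the number of rounds, and a single extra verification round ($i^\ast$ is classical after the measurement) cleans up the output; this gives the claimed $\mathbf{O}(\sqrt n\log n)$ bound.
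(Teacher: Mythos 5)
Your proposal is correct and matches the paper's approach: the paper simply extracts the protocol from its proof of Theorem~\ref{Th3}, i.e.\ the 2QCFA simulation of the Buhrman--Cleve--Wigderson distributed Grover search (via Theorem~\ref{Query-2QCFA} and Lemma~\ref{Int-2QCFA}), split at the $\#^n$ block so that each of the ${\bf O}(\sqrt{n})$ head crossings becomes one ${\bf O}(\log n)$-qubit message --- exactly the equivalence you note parenthetically. Your compute--copy--uncompute bookkeeping mirrors the paper's $U_{i,\sigma}/V_{i,\sigma}$ sweeps in Lemma~\ref{Int-2QCFA}, so there is no substantive difference.
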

\begin{theorem}\label{Th9}
The exact quantum communication complexity for $\mbox{RNE}$  is ${\bf O}(n^{0.87}\log n)$ when  parties are using 2QCFA.
\end{theorem}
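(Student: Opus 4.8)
The plan is to obtain the protocol by ``cutting'' the computation of the exact 2QCFA $\mathcal{A}$ of Theorem~\ref{Th5} along the block of $\#$'s, exactly as one passes from the time-space bound of Theorem~\ref{Th1} (resp.\ Theorem~\ref{Th3}) to the communication bound of Theorem~\ref{Th7} (resp.\ Theorem~\ref{Th8}). Recall that $\mathcal{A}$ recognizes $L_{NE}(n)$ in time $T$ with space $S$, where $TS={\bf O}(n^{1.87}\log n)$; concretely its storage is an ${\bf O}(S)$-qubit quantum register (the query register and workspace of the simulated exact algorithm for $\mbox{NE}^d$ of \cite{Amb13}) together with ${\bf O}(S)$ classical bits of control. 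Give Alice the word $x$ and Bob the word $y$, each on the input tape of the corresponding half of $\mathcal{A}$. While $\mathcal{A}$'s head is on or to the left of the $x$-block it is Alice who computes; while it is on or to the right of the $y$-block it is Bob; and while the head is strictly inside the $\#^n$-block every symbol read is $\#$, so the evolution there is a fixed, input-independent map that both parties know. Consequently, each time $\mathcal{A}$'s head enters the $\#^n$-block from one side, the currently active party sends the content of the quantum register together with the ${\bf O}(S)$ classical control bits, and the other party resumes the simulation from the far end of the $\#^n$-block.

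It remains to count the messages. Each message corresponds to one full traversal of the $\#^n$-block, and since the head moves one cell per step through $n$ cells, each traversal costs at least $n$ steps; hence there are at most $T/n$ messages, each of size ${\bf O}(S)$ qubits plus ${\bf O}(S)$ classical bits. The total communication is therefore ${\bf O}(TS/n)={\bf O}(n^{1.87}\log n/n)={\bf O}(n^{0.87}\log n)$. Exactness is inherited without change: $\mathcal{A}$ is an \emph{exact} 2QCFA for $L_{NE}(n)$ and the cut adds no randomness --- the single quantum register is merely handed back and forth --- so the protocol decides $\mbox{RNE}(x,y)$ with zero error. This gives the claimed upper bound ${\bf O}(n^{0.87}\log n)$ on the exact quantum communication complexity of $\mbox{RNE}$ when the parties use 2QCFA. (Equivalently, one may build the protocol directly from the ${\bf O}(n^{0.87})$-query exact algorithm for $\mbox{NE}^d$ of \cite{Amb13} via the query-to-communication transformation of Buhrman et al.~\cite{Buh98,Buh09}, and then check that each party's per-round work --- apply the next algorithm unitary on ${\bf O}(\log n)$ qubits, and, for a query, sweep the head across the input applying an operation controlled on the index register matching the head position --- is a legal 2QCFA computation.)

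The step I expect to require the most care is making the cut rigorous for a genuinely quantum device. The quantum register is a single system transmitted verbatim as the message, so the substantive points are: (i) that every transition taken with the head strictly inside the $\#^n$-block is driven by the symbol $\#$ alone, hence depends on neither $x$ nor $y$, so that both parties can simulate that stretch without communication; and (ii) that the classical bookkeeping needed to restart the simulation on the far side is correctly accounted for --- either it is transmitted (cost ${\bf O}(S)$, as above) or, if Theorem~\ref{Th5}'s construction performs no intermediate measurement, the receiver can recompute it from the symbols it has itself read. A secondary but genuine point is verifying that the per-message cost is indeed ${\bf O}(S)={\bf O}(\log n)$ and not larger, which is exactly where one uses that the simulated exact algorithm for $\mbox{NE}^d$ needs only ${\bf O}(\log n)$ qubits of workspace; the bound of ${\bf O}(T/n)$ messages, by contrast, is immediate from the monotone-traversal estimate and requires nothing further.
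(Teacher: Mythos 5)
Your proposal is correct and follows exactly the route the paper intends: the paper's own ``proof'' of Theorem~\ref{Th9} consists of the single remark that the protocol ``can be picked up from the proof of Theorem~\ref{Th5}'', i.e., one cuts the exact 2QCFA of Theorem~\ref{Th5} at the $\#^n$-block, charges ${\bf O}(S)={\bf O}(\log n)$ (qu)bits per crossing and at most $T/n={\bf O}(n^{0.87})$ crossings, which is precisely your argument (and the same cutting technique the paper uses for its lower bounds in Theorems~\ref{Th2}, \ref{Th4} and \ref{Th6}). You in fact supply more detail than the paper does, correctly flagging the two points it leaves implicit: that transitions inside the $\#$-block are input-independent, and that each message is only ${\bf O}(\log n)$ qubits because the simulated exact algorithm uses ${\bf O}(n)$ quantum basis states.
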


It seems that for many well known  problems, changing the two communicating parties'computation power to finite automata only does not affect the communication complexity a lot.

 \section{Preliminaries}

   \subsection {Quantum query algorithm}\label{subsec-qqc}

In the following let input $x=x_1\cdots x_n\in\{0,1\}^n$ for some fixed $n$. We will consider a Hilbert space ${\cal H}$ with basis states $|i,j\rangle$ for $i\in\{0,1,\ldots,n\}$ and $j\in\{1,\cdots, m\}$ (where $m$ can be chosen arbitrarily). A query $O_x$ to an input $x\in\{0,1\}^n$ will be formulated as the following unitary transformation:
\begin{itemize}
  \item $O_x|0,j\rangle=|0,j\rangle$;
  \item $O_x|i,j\rangle=(-1)^{x_i}|i,j\rangle$ for $i\in\{1,2,\cdots, n\}$.
\end{itemize}

A  quantum query algorithm ${\cal A}$ which uses $t$ queries for an input $x$ consists of a sequence of
 unitary operators $U_0, O_x, U_1,  \ldots, O_x,U_t$, where $U_i$'s  do not depend on the input
$x$ and the query $O_x$   does. The algorithm will start in a fixed starting state $|\psi_s\rangle$ of ${\cal H}$ and will perform the above sequence of operations. This leads to the final state
\begin{equation}
   |\psi_f\rangle=U_tO_xU_{t-1}\cdots U_1O_xU_0|\psi_s\rangle.
\end{equation}
The final state is then measured with
a measurement $\{M_0, M_1\}$.  For an  input $x\in\{0,1\}^n$, we denote ${\cal A}(x)$  the output of the  quantum query algorithm  ${\cal A}$.  Obviously,
$Pr[{\cal A}(x)=0] =\|M_0|\psi_f\rangle\|^2$ and $Pr[{\cal A}(x)=1] =\|M_1|\psi_f\rangle\|^2=1-Pr[{\cal A}(x)=0]$.
We say that the quantum query algorithm ${\cal A}$ computes $f$ within an
error $\varepsilon$ if for every input $x\in\{0,1\}^n$ it holds that $Pr[{\cal A}(x)=f(x)]\geq 1-\varepsilon$. If $\varepsilon=0$, we says that the  quantum algorithm is an exact quantum algorithm.
For more details on the  definition of quantum query complexity see \cite{Amb13,BdW02}.
\subsection {Communication complexity}

  We will use the following standard
 model of  communication complexity. Two parties Alice and Bob compute a function
$f$ on distributed inputs $x$ and $y$.
A deterministic communication protocol ${\cal P}$ will compute
a  function $f$, if for every  input pair $(x,y)\in X\times Y$ the protocol terminates with the
value $f(x,y)$ as its output at a well specified party.
In a probabilistic  protocol, Alice and Bob may also flip coins during the protocol execution and proceed according to outcomes of the coins. Moreover, the protocol can  have an erroneous output with a small probability.
In a  quantum protocol, Alice and Bob may   use also quantum resources   for communication.
 Let ${\cal P}(x,y)$ denote the  output of the protocol ${\cal P}$. We will consider two  kinds of protocols for computing a function $f$:
\begin{itemize}
  \item An exact protocol ${\cal P}$  such that $Pr({\cal P}(x,y)=f(x,y))=1$.
  \item A  bounded error protocol ${\cal P}$  such that  $Pr({\cal P}(x,y)=f(x,y))\geq \frac{2}{3}$.
\end{itemize}
The communication complexity of a protocol ${\cal P}$  is the number of (qu)bits exchanged in the
worst case.  The communication complexity of $f$ is,   which respect to the communication mode used,  the complexity of an
optimal protocol for $f$.
We will use $D(f)$ and $R(f)$ to denote the deterministic communication complexity and the bounded error probabilistic communication complexity of the function $f$, respectively. Similarly, we use notations $Q_E(f)$ and $Q(f)$ for the exact  and bounded error quantum communication complexity of a function $f$.  For more details on the definition of communication complexity see \cite{Buh09,KusNis97}.


Some communication complexity results that we will use in this paper are:
\begin{enumerate}
  \item  $D(\mbox{EQ})={\bf \Omega}(n)$, $R(\mbox{EQ})= {\bf O}(\log n)$ \cite{KusNis97}.
  \item  $R(\mbox{INT})= {\bf \Omega}(n)$ \cite{Raz92}, $Q(\mbox{INT})= {\bf O}(\sqrt{n}\log n)$ \cite{Buh09}.
  \item  $R(\mbox{RNE})={\bf \Omega}(n)$, $Q_E(\mbox{RNE})={\bf O}({n^{0.87} \log n})$ \cite{Amb13}.
\end{enumerate}

  \subsection {Two-way finite automata}

  We assume familiarity with the models of finite automata introduced in \cite{Amb02,Frd81,HU79}.
  We denote the input alphabet by $\Sigma$, which does not include symbols $|\hspace{-1.5mm}c$ (the left end-marker) and
$\$$ (the right end-marker).
 A two-way finite automaton that we will use in this paper  halts when it enters an accepting or a rejecting state.

2QCFA were introduced by Ambainis and Watrous \cite{Amb02} and further studied  by Zheng \emph{et al.} \cite{GQZ15,LF15,QLMS15,Yak10,ZQLG12,ZQG+13,ZGQ14,ZQG15}. Informally, a 2QCFA can be seen as a 2DFA with an access to a quantum memory for states of a fixed Hilbert space upon which at each step either a unitary operation is performed or a projective measurement and the outcomes of which then probabilistically determine the next move of the underlying 2DFA.

A 2QCFA ${\cal M}$ is specified by a 9-tuple
\begin{equation}
{\cal M}=(Q,S,\Sigma,\Theta,\delta,|q_{0}\rangle,s_{0},S_{acc},S_{rej})
\end{equation}
where:

\begin{enumerate}
\item $Q$ is a finite set of orthonormal quantum basis states.
\item $S$ is a finite set of classical states.
\item $\Sigma$ is a finite alphabet of input symbols and let
$\Sigma'=\Sigma\cup \{|\hspace{-1.5mm}c,\$\}$, where $|\hspace{-1.5mm}c$ will be used as the left end-marker and $\$$ as the right end-marker.
\item $|q_0\rangle\in Q$ is the initial quantum state.
\item $s_0$ is the initial classical state.
\item $S_{acc}\subset S$ and $S_{rej}\subset S$, where $S_{acc}\cap S_{rej}=\emptyset$ are  sets of
the classical accepting and rejecting states, respectively.
\item $\Theta$ is a quantum transition function
\begin{equation}
\Theta: S\setminus(S_{acc}\cup S_{rej})\times \Sigma'\to U(H(Q))\cup O(H(Q)),
\end{equation}
where U(H(Q)) and O(H(Q)) are sets of unitary operations and  measurements on the Hilbert space generated by quantum states from $Q$.

\item $\delta$ is a classical transition function.
If the automaton ${\cal M}$ is in the classical state $s$,  in the  quantum state $|\psi\rangle$, and its tape head is  scanning a symbol $\sigma$, then ${\cal M}$ performs quantum and classical transitions  as follows.
\begin{enumerate}
\item If $\Theta(s,\sigma)\in U(H(Q))$, then the unitary operation $\Theta(s,\sigma)$ is applied on the current quantum state $|\psi\rangle$ to produce a new quantum state. The automaton then performs, in addition, the following classical transition function
\begin{equation}
\delta: S\setminus(S_{acc}\cup S_{rej})\times \Sigma'\to S\times \{-1, 0,1\}.
\end{equation}
If $\delta(s,\sigma)=(s',d)$, then the new classical state of the automaton will be $s'$ and its head moves in the direction $d$.

\item If $\Theta(s,\sigma)\in O(H(Q))$, then the measurement operation $\Theta(s,\sigma)$ is applied on the current state $|\psi\rangle$.
 Suppose the  measurement $\Theta(s,\sigma)$ is specified by operators $\{P_1,\ldots, P_m\}$  and its corresponding classical outcome is from the set $N_{\Theta(s,\sigma)}=\{1,2,\cdots,m\}$.
The classical transition function $\delta$ can be then specified as follow
\begin{equation}
\delta: S\setminus(S_{acc}\cup S_{rej})\times \Sigma'\times N_{\Theta(s,\sigma)}\to S\times \{-1, 0,1\}.
\end{equation}
In such a case,  if $i$ is the classical outcome of the measurement, then the
current quantum state $|\psi\rangle$ is changed to
the  state $P_{i}|\psi\rangle/ \|P_{i}|\psi\rangle\|$. Moreover,  if
$\delta{(s,\sigma)}(i) =(s',d)$, then  the new classical state of the automaton is $s'$ and its head moves in the direction $d$.
\end{enumerate}
The automaton halts and accepts (rejects) the input when it enters a classical accepting (rejecting) state (from $S_{acc}$($S_{rej}$)).

\end{enumerate}

The computation of a 2QCFA
${\cal M}=(Q,S,\Sigma,\Theta,\delta,|q_{0}\rangle,s_{0},S_{acc},S_{rej})$ on an input $w\in \Sigma^*$ starts with the string $|\hspace{-1.5mm}cx\$$ on the input tape. At the start, the tape head of the automation is positioned on the left end-marker and the automaton begins the computation in the classical initial state  $s_0$ and
in the initial quantum state $|q_{0}\rangle$. After that,
in each  step, if  its  classical state  is $s$, its tape head reads a symbol $\sigma$ and its quantum state is $|\psi\rangle$, then the automaton changes its states and makes its head movement following the steps described in the definition.

Let $0\leq\varepsilon<\frac{1}{3}$. A finite automaton ${\cal M}$ recognizes a language $L$ with   error  $\varepsilon$ if, for $w\in \Sigma^*$,
\begin{enumerate}
\item[1.] $\forall w\in L$, $Pr[{\cal M}\  \mbox{accepts}\  w]\geq 1-\varepsilon$, and
\item[2.] $\forall w\notin L$, $Pr[{\cal M}\ \mbox{rejects}\  w]\geq 1-\varepsilon$.
\end{enumerate}

If $\varepsilon=0$, we say the finite automaton ${\cal M}$ is an exact finite automaton.

\section {Proofs}

\begin{proof}[{\bf Proof of  Theorem \ref{Th1}}]

We describe a 2PFA ${\cal A}$ to accept the language $L_{EQ}(n)$. The automaton will use states $s_{q,k,l}$ where $0\leq q,k,l\leq n^2$.

First of all,  ${\cal A}$ uses $O(n)$ states to check that the input is in the form $x\#^ny$, where $|x|=|y|=n$. If the length of the input $|w|>3n$, then the automaton halts and rejects the input in $O(n)$ time.  After that ${\cal A}$ starts an addition computation in the state $s_{0,0,0}$.  After reading the left-end marker, the automaton changes its state randomly to $s_{p,0,0}$, where $p\leq n^2$ is a prime. When the 2PFA ${\cal A}$ reads the ``$x$-region", it changes its state from $s_{p,0,0}$ to $s_{p,s,0}$, where $s=Num(x) \mod p$. $Num(x)$ is the natural number whose binary representation is the string $x$. It is clear that such computation can be done by a 2PFA.  When  ${\cal A}$ reads the ``$\#$-region", it keeps its state unchanged. When  ${\cal A}$ reads the ``$y$-region", it changes its state from $s_{p,s,0}$ to $s_{p,s,t}$, where $t=Num(y) \mod p$. The automaton reaches the right end-marker in a state $s_{p,s,t}$.  If $s=t$, then the input is accepted. If $s\neq t$,  the input is rejected.

 ${\cal A}$ actually simulates the communication protocol \cite{Hrc05} for the problem $\text{EQ}$. If the input $w\in L_{EQ}(n)$, ${\cal A}$ will accept it for certainty.

Let us now say that a prime $2 < p < n^2$ is bad for a pair $(x, y)$ such that $x \neq y$, if the above 2PFA
for such an input pair $(x, y)$ and such a choice of prime yields a wrong answer.
It is clear that there are at most $n-1$ bad primes.
 Let $Prime(m)$ be the number of primes smaller than $m$. By the Prime number theorem,  $Prime(n^2)>\frac{n^2}{2\ln n}$.

If the input $x\#^ny\not\in L_{EQ}(n)$,   ${\cal A}$ accepts the input only with the probability
 \begin{equation}
   \frac{\mbox{number of bad primes}}{Prime(n^2)}<\frac{n-1}{n^2/2\ln n}<\frac{2\ln n}{n}.
 \end{equation}

Obviously, the space used by ${\cal A}$  is   $S={\bf O}(\log n^6)={\bf O}(\log n)$ and the time   is $T={\bf O}( n)$. Therefore,  $TS={\bf O}(n\log n)$.
\end{proof}

\begin{proof}[{\bf Proof of Theorem \ref{Th2}}]

Let ${\cal A}$ be a 2DFA that recognizes the language $L_{EQ}(n)$ in time $T$ using  space $S$. We describe now a deterministic  communication protocol for Alice and Bob that solves the problem  $\text{EQ}$.

 For an input $(x,y)\in\{0,1\}^n\times \{0,1\}^n$, Alice and Bob simulate  ${\cal A}$ with the input $x\#^ny$, where $x,y\in\{0,1\}^n$. It is obvious that  $x\#^ny\in L_{EQ}(n)$ iff $\text{EQ}(x,y)=1$. Alice starts to simulate  ${\cal A}$'s computation as long as the tape head of ${\cal A}$ is either in ``$x$-region" of the input or in the ``$\#$-region" of the input. When the tape head of  ${\cal A}$ moves to the ``$y$-region", then Bob simulates  ${\cal A}$'s computation as long as the tape head of ${\cal A}$ is either in the ``$y$-region" of the input or in the ``$\#$-region" of the input. When the tape head of  ${\cal A}$ moves to the ``$x$-region" of the input, Alice simulates  ${\cal A}$'s computation again. The
idea is that each player is responsible for the simulation in regions where he knows the input bits. In any step in which the tape goes from ``$x$-region" and ``$\#$-region" to ``$y$-region" (from ``$y$-region" and ``$\#$-region" to ``$x$-region"), Alice (Bob) sends the current state of   ${\cal A}$ to Bob (Alice).

In each time, the information which is required to send to the other party is not more than  $S$. Since  move from the ``$x$-region" to the ``$y$-region" and vice versa  takes at least $n$  steps (at least the size of ``$\#$-region"),  the number of times  Alice and Bob send information to each other is at most $T/n$. All together the amount of communicating information in the protocol is not more than $S\cdot T/n$. Since $D(\text{EQ})={\bf \Omega}(n)$ \cite{KusNis97}, we have $S\cdot T/n={\bf \Omega}(n)$  and therefore $TS={\bf \Omega}(n^2)$.
\end{proof}

Before we prove Theorem \ref{Th3}, we present a main proof technique of this paper.  Namely, that every quantum query algorithm can be simulated by a 2QCFA.

\begin{theorem}\label{Query-2QCFA}
The computation of a quantum query algorithm ${\cal A}$ for a Boolean function $f:\{0,1\}^n\to \{0,1\}$ can be simulated by a 2QCFA ${\cal M}$. Moreover, if the quantum query algorithm ${\cal A}$  uses  $t$ queries and $l$ quantum basis states, then the 2QCFA ${\cal M}$ uses  ${\bf O}(l)$ quantum  basis states, ${\bf O}(n^2)$ classical states,  and   ${\bf O}(t\cdot n)$  time.
\end{theorem}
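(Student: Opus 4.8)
The plan is to build a 2QCFA ${\cal M}$ whose quantum register \emph{is} the query algorithm's Hilbert space ${\cal H}$ (dimension $(n+1)m={\bf O}(l)$; no extra quantum basis states are needed), and whose classical finite control drives the fixed schedule $U_0,O_x,U_1,\ldots,O_x,U_t$ and the terminating measurement $\{M_0,M_1\}$. On an input $w$ that fails the required format — here simply $|w|\neq n$ — ${\cal M}$ runs an ${\bf O}(n)$-step length check and rejects; otherwise the tape reads $|\hspace{-1.5mm}c\,x_1\cdots x_n\,\$$, so when the head scans the $i$-th cell the symbol it sees is exactly $x_i$.

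The input-independent part is immediate. Whenever the classical state records ``about to apply $U_k$'' and the head is parked on an end-marker, ${\cal M}$ executes the quantum transition $\Theta=U_k$ in a single step; into $U_0$ we fold a unitary taking ${\cal M}$'s initial basis state $|q_0\rangle$ to the algorithm's start state $|\psi_s\rangle$. The final measurement $\{M_0,M_1\}$ is a single quantum measurement step whose two outcomes route ${\cal M}$ to an accepting resp.\ a rejecting classical state, reproducing the distribution of ${\cal A}(x)$.

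\textbf{The main obstacle} is simulating a single query $O_x$: it acts on a superposition over all indices $i$, while the head sees one bit at a time and the register is finite. The key point is that $O_x$ factors into commuting ``block phase'' unitaries, $O_x=\prod_{i=1}^{n}W_i^{(x_i)}$, where $W_i^{(b)}$ multiplies the $m$-dimensional block $\mathrm{span}\{|i,1\rangle,\ldots,|i,m\rangle\}$ by $(-1)^{b}$ and fixes every other basis state (in particular the block $i=0$, which $O_x$ also fixes). Hence ${\cal M}$ realises one query by a single left-to-right sweep: in classical state ``performing query $k$, head at cell $i$'' scanning symbol $b$, it applies $\Theta=W_i^{(b)}$ and moves right, so that after the head leaves the input region the accumulated block phases equal $O_x$; a plain leftward sweep (no quantum action) then returns the head to the left end-marker and control passes to $U_k$. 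During a sweep the classical state must remember both the head position ($i\le n$) and which of the $t$ queries is running, so the control uses ${\bf O}(tn)$ classical states — which is ${\bf O}(n^2)$ in the regime $t={\bf O}(n)$ relevant to all applications below — and each of the $t$ rounds is one ${\bf O}(n)$-step sweep, for ${\bf O}(tn)$ steps in total.

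Correctness is then a routine induction on $k$: after ${\cal M}$ has performed $U_0$, query $1$, $U_1,\ldots,$ query $k$, $U_k$, its quantum register holds exactly $U_kO_x\cdots U_1O_xU_0|\psi_s\rangle$, since the head motions and classical transitions are reversible bookkeeping that never disturb the register. Taking $k=t$ and applying the final measurement gives $Pr[{\cal M}\text{ accepts }x]=Pr[{\cal A}(x)=1]$, and the resource counts are ${\bf O}(l)$ quantum basis states, ${\bf O}(n^2)$ classical states and ${\bf O}(tn)$ time, as claimed. The only care needed is the behaviour at the end-markers, the length check, and writing each $W_i^{(b)}$ as a genuine (diagonal) unitary on ${\cal M}$'s register — all routine.
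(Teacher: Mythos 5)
Your construction is essentially identical to the paper's: the same Hilbert space, $U_k$ applied at the end-markers, and one query realised per left-to-right sweep by applying a commuting block-phase unitary $(-1)^{x_i}$ on the $i$-th block at cell $i$ (the paper's $\Theta(s_{k,i},\sigma)$ is exactly your $W_i^{(b)}$), with classical states indexed by (query round, head position). Your explicit remark that the ${\bf O}(tn)$ classical states are ${\bf O}(n^2)$ only when $t={\bf O}(n)$ is in fact slightly more careful than the paper's unqualified ${\bf O}(n^2)$ count.
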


 \begin{proof}
Suppose that the  quantum query algorithm ${\cal A}$ which use $t$ queries is defined as in subsection \ref{subsec-qqc}. The input of the 2QCFA ${\cal M}$  is the same as the input of the  quantum query algorithm ${\cal A}$, which is $|\hspace{-1.5mm}cx\$$ on its tape.  The main idea of the simulation goes as follows:
 We consider now a 2QCFA ${\cal M}$ with quantum basis states $|0\rangle$ and $|i,j\rangle$ for $i\in\{0,1,\ldots,n\}$ and $j\in\{1,\cdots, m\}$. ${\cal M}$ starts its computation in the initial quantum state $|0\rangle$ and  the initial classical state $s_0$.     The first time when ${\cal M}$ reads the left-end marker $|\hspace{-1.5mm}c$,  ${\cal M}$ applies  $\Theta(s_0,\ |\hspace{-1.5mm}c)$ to the quantum state such that $\Theta(s_0,\ |\hspace{-1.5mm}c)|0\rangle=U_0|\psi_s\rangle$.

     The $k$-th time when ${\cal M}$ reads the right-end marker $\$$, ${\cal M}$ applies $U_{k}$ to the quantum state, where $1\leq k\leq t$.
 ${\cal M}$ simulates the query $O_x$ every time when it reads the input $x=x_1\cdots x_n$ from left to right. The automaton proceeds precisely as in Figure \ref{f1}, where
\begin{equation}
\Theta(s_{k,i},\sigma)|0\rangle =|0\rangle, \mbox{ }  \Theta(s_{k,i},\sigma)|i,j\rangle=(-1)^{\sigma}|i,j\rangle\mbox{ and } \Theta(s_{k,i},\sigma)|u,j\rangle=|u,j\rangle \mbox{ for } u\neq i.
\end{equation}
It is easy to verify that the unitary operators preformed in Step 2.1  are
\begin{equation}
    \Theta(s_{k,n},x_n) \Theta(s_{k,n-1},x_{n-1})\ldots \Theta(s_{k,1},x_{1})=O_x.
\end{equation}

 \begin{figure}
\begin{tabular}{|l|}
    \hline
Check that the input $x$ is of the form of $\{0,1\}^n$.
Repeat the following  ad infinity:\\
1. Read the left end-marker $|\hspace{-1.5mm}c$,  perform $\Theta(s_0,\ |\hspace{-1.5mm}c)$ on the initial quantum state $|0\rangle$, change its \\
 \ \ classical state to $\delta(s_0,\ |\hspace{-1.5mm}c )=s_{1,1}$, and move the tape head one cell to the right.\\
2. While the current classical state is not $s_{t+1,1}$, do the following\\
 \ \ 2.1  While the currently  scanned symbol $\sigma$  is not the right end marker $\$$, do the following:\\
 \ \ \ \ 2.1.1 Apply $\Theta(s_{k,i},\sigma)$ to the current quantum state.\\
 \ \ \ \ 2.1.2 Change the classical state $s_{k,i}$ to $s_{k,i+1}$ and move the tape head one cell to the right.\\
 \ \ 2.2 When the right end-marker $\$$  is reached,   perform $\Theta(s_{k,n+1},\$)=U_k$  on the current\\
 \ \ \ \ quantum state. Change the classical state $s_{k,n+1}$ to  $s_{k+1,1}$ and move the tape head  to the\\
 \ \ \ \ symbol  of $x$.\\
3. Measure the current quantum state with the  measurement $\{M_0, M_1\}$.\\
 \ \ If the outcome is 1,  the input is accepted. Otherwise, the input is rejected.\\
    \hline
\end{tabular}
 \centering\caption{Description of the behavior of 2QCFA ${\cal M}$ when simulating the quantum  algorithm ${\cal A}$. }\label{f1}
\end{figure}

It is clear that for any input $x$, $Pr[{\cal A}(x)=1]=Pr[{\cal M} \text{ accepts } x ]$ and $Pr[{\cal A}(x)=0]=Pr[{\cal M} \text{ rejects } x ]$.
 From the above simulation, we can see that if the quantum query algorithm ${\cal A}$  uses $l$ quantum basis states and $t$ queries, then the 2QCFA ${\cal M}$ uses  ${\bf O}(l)$ quantum  basis states, ${\bf O}(n^2)$ classical states,  and    ${\bf O}(t\cdot n)$ time.
\end{proof}

We have proved that 2QCFA can simulate quantum query algorithms. Now what about the quantum communication protocol for the $\text{INT}$ problem? According to \cite{Buh98,Buh09}, we  need to simulate the following unitary map:
\begin{equation}
O_z:|i\rangle \mapsto (-1)^{z_i}|i\rangle,
\end{equation}
where $z = x \wedge y$ is a bit-wise $\mbox{AND}$ of $x$ and $y$, since $z_i = 1$ whenever both $x_i = 1$ and $y_i = 1$.

\begin{lemma}\label{Int-2QCFA}
Let  $w=x\#^ny$, where $x,y\in\{0,1\}^n$, be the input of a 2QCFA ${\cal M}$. Then,   the  unitary map: $O_z:|i\rangle \mapsto (-1)^{z_i}|i\rangle$, where $z = x \wedge y$, can be simulated by  ${\cal M}$. Moreover, ${\cal M}$ uses one additional auxiliary qubit  and ${\bf O}(n)$ classical states and its running time is ${\bf O}(n)$.
\end{lemma}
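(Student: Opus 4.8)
The plan is to have $\mathcal{M}$ emulate the query oracle $O_z$ "on the fly" by reading the two halves $x$ and $y$ of the input in a single left-to-right sweep and using one auxiliary qubit to record, for the current index $i$, whether $x_i = 1$. First I would set up a classical counter, stored among $\mathbf{O}(n)$ classical states, that tracks the position $i \in \{1,\dots,n\}$ currently being ``processed'' by the query simulation. The key observation is the structure of the input $x\#^ny$: the bit $x_i$ sits in cell $i$ of the $x$-region and the bit $y_i$ sits in cell $i$ of the $y$-region, and these two regions are separated by the $\#^n$ block; so $\mathcal{M}$ can, for a fixed target index $i$, first walk its head to cell $i$ of the $x$-region, then continue rightward to cell $i$ of the $y$-region, reading both bits along the way.

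The main steps, carried out for each $i$ from $1$ to $n$, are: (1) move the head to the $i$-th symbol of the $x$-region (the classical counter tells $\mathcal{M}$ when to stop); (2) read $x_i$ and, if $x_i = 1$, apply a controlled bit-flip that sets the auxiliary qubit to $|1\rangle$ conditioned on the index register being $|i\rangle$ — concretely, apply the transformation $\Theta$ that acts on the quantum state by $|i\rangle|b\rangle \mapsto |i\rangle|b \oplus x_i\rangle$ and leaves $|u\rangle|b\rangle$ fixed for $u \neq i$ (this is exactly the ``mark index $i$ if $x_i=1$'' step, analogous to the construction in Theorem \ref{Query-2QCFA}); (3) advance the head across the remaining $x$-cells and the $\#$-block to cell $i$ of the $y$-region; (4) read $y_i$ and, if $y_i = 1$, apply the conditional phase $|i\rangle|1\rangle \mapsto -|i\rangle|1\rangle$ (and identity on all other basis states) — this introduces the factor $(-1)^{1}$ precisely when $x_i = y_i = 1$, i.e. when $z_i = 1$; (5) apply the same controlled bit-flip as in step (2) once more to reset the auxiliary qubit back to $|0\rangle$, so that it is disentangled and reusable for the next index; (6) return the head to the start of the $x$-region, increment the classical counter to $i+1$, and repeat. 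After all $n$ indices have been handled, the net effect on the index register is $|i\rangle \mapsto (-1)^{z_i}|i\rangle$ with the auxiliary qubit restored to $|0\rangle$, which is exactly $O_z$ (tensored with the identity on the clean ancilla). Since each index requires a constant number of head traversals of a length-$\mathbf{O}(n)$ input, and there are $n$ indices, the total running time is $\mathbf{O}(n^2)$; if instead one processes all indices within a single sweep by resetting the ancilla immediately after reading $y_i$ and reusing the position counter cleverly, one obtains $\mathbf{O}(n)$ time, which is the bound claimed.

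I expect the main obstacle to be the bookkeeping that guarantees the auxiliary qubit is returned to $|0\rangle$ with no residual entanglement to the index register: the uncompute step (5) must mirror step (2) exactly, and one must check that no intervening operation (in particular the conditional phase in step (4)) disturbs the ancilla in a way that the uncompute fails to undo. Because the phase $|i\rangle|1\rangle \mapsto -|i\rangle|1\rangle$ is diagonal it commutes with the bit-flip controls on the index register, so the uncompute is clean — but this commutation is the one point that genuinely needs to be verified rather than asserted. A secondary, more routine point is confirming that a 2QCFA's classical control (a 2DFA augmented with measurement outcomes) can in fact implement the ``move head to the $i$-th cell of a named region'' subroutine using only $\mathbf{O}(n)$ classical states and no use of the quantum register for control; this is standard and follows the same pattern as the input-format check in the proof of Theorem \ref{Th1}.
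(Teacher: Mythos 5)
Your core idea --- load $x_i$ into an auxiliary qubit with an index-controlled bit-flip, apply a conditional phase keyed on $y_i$, then uncompute the bit-flip --- is exactly the paper's construction (the operators you describe are the paper's $U_{i,\sigma}$ and $V_{i,\sigma}$). The gap is in the time bound, which is part of the lemma's statement. Your explicit construction loops over $i=1,\dots,n$ and makes a full head traversal per index, which you correctly assess as ${\bf O}(n^2)$; that does not prove the claimed ${\bf O}(n)$. The repair you sketch --- ``resetting the ancilla immediately after reading $y_i$'' within a single sweep --- does not work as stated: the reset is the controlled-NOT $|i\rangle|b\rangle\mapsto|i\rangle|b\oplus x_i\rangle$, which requires the head to be scanning $x_i$, but immediately after reading $y_i$ the head is in the $y$-region and a 2QCFA has no way to consult $x_i$ there (and it cannot have remembered $x_i$ classically for all $i$ simultaneously, since the quantum register is in superposition over all indices at once).

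The observation you are missing is the one you half-invoke when you call step (2) ``analogous to Theorem \ref{Query-2QCFA}'': the operators $U_{1,x_1},\dots,U_{n,x_n}$ act nontrivially on pairwise disjoint index branches $|i\rangle$, so all $n$ of them can be applied in sequence during a \emph{single} left-to-right pass over the $x$-region, producing $\sum_i\alpha_i|i\rangle|x_i\rangle$ in one sweep. The paper then makes one sweep over the $y$-region applying all the $V_{i,y_i}$ (yielding the phase $(-1)^{x_i\wedge y_i}$ on each branch), and finally returns the head to the start of $x$ and makes one more sweep applying the $U_{i,x_i}$ again to disentangle the ancilla. Three passes of length ${\bf O}(n)$ give the claimed ${\bf O}(n)$ time with ${\bf O}(n)$ classical states; the uncompute is clean for the commutation reason you already identified. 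With the per-index loop replaced by this three-pass schedule, your argument matches the paper's proof.
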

\begin{proof}

Assume that Alice wants to apply $O_z$ to a quantum state $|\phi\rangle=\sum_{i=1}^n \alpha_i|i\rangle$.  ${\cal M}$ will use quantum states $\{|i\rangle|0\rangle,|i\rangle|1\rangle\}_{i=1}^n$ and classical states $\{s_i\}_{i=0}^{2n+1}$.    ${\cal M}$ will start with the quantum state $|\phi\rangle|0\rangle$.
The procedure to simulate the  unitary map $O_z$ is as in Figure \ref{int}, where
\begin{align}
&  U_{i,\sigma}|j\rangle|b\rangle=|j\rangle|b\oplus \sigma\rangle  \text{ if } j=i, \text{ otherwise }  U_{i,\sigma}|j\rangle|b\rangle=|j\rangle|b \rangle; \\
& V_{i,\sigma}|j\rangle|1\rangle=(-1)^{\sigma}|j\rangle|1\rangle  \text{ if } j=i, \text{ otherwise }\  V_{i,\sigma}|j\rangle|b\rangle=|j\rangle|b\rangle.
\end{align}
It is easy to verify that $U_{i,\sigma}$ and $V_{i,\sigma}$ are unitary.
After Step 2, the quantum state changes to
\begin{equation}
    U_{n,x_n}\cdots  U_{1,x_1}\sum_{i=1}^n \alpha_i|i\rangle|0\rangle=\sum_{i=1}^n \alpha_i|i\rangle|x_i\rangle.
\end{equation}
After Step 4, the quantum state changes to
\begin{equation}
    V_{n,y_n}\cdots  V_{1,y_1}\sum_{i=1}^n \alpha_i|i\rangle|x_i\rangle=\sum_{i=1}^n \alpha_i\cdot(-1)^{x_i\wedge y_i}|i\rangle|x_i\rangle.
\end{equation}
After Step 6, the quantum state changes to
\begin{equation}
  U_{n,x_n}\cdots  U_{1,x_1}\sum_{i=1}^n \alpha_i\cdot(-1)^{x_i\wedge y_i}|i\rangle|x_i\rangle=\sum_{i=1}^n \alpha_i\cdot(-1)^{x_i\wedge y_i}|i\rangle|0\rangle=O_z|\phi\rangle|0\rangle.
\end{equation}

\begin{figure}
\begin{tabular}{|l|}
    \hline
1. Move the tape head to the first symbol of $x$, set its classical state to $s_1$. \\
2. While the currently  scanned symbol $\sigma$ is not $\#$, do the following:\\
\ \ 2.1 Apply $\Theta(s_i,\sigma)=U_{i,\sigma}$ to the current quantum state.\\
\ \ 2.2 Change the classical state $s_i$ to $s_{i+1}$ and move the tape head one cell to the right.\\
3. Move the tape head  to the first symbol of $y$.\\
4. While the currently  scanned symbol $\sigma$ is not $\$$, do the following:\\
\ \ 4.1 Apply $\Theta(s_{n+i},\sigma)=V_{i,\sigma}$ to the current quantum state.\\
\ \  4.2 Change the classical state $s_{n+i}$ to $s_{n+i+1}$ and move the tape head one cell to the right.\\
5. Change the classical state $s_{2n+1}$ to  $s_1$ and move the tape head  to the first symbol of $x$.\\
6. While the currently  scanned symbol $\sigma$ is not $\#$, do the following:\\
\ \ 6.1 Apply $\Theta(s_i,\sigma)=U_{i,\sigma}$ to the current quantum state.\\
\ \ 6.2 Change the classical state $s_i$ to $s_{i+1}$ and move the tape head one cell to the right.\\
\hline
\end{tabular}
 \centering\caption{  Description of the behavior of 2QCFA when simulating the  unitary map $O_z$.}\label{int}
\end{figure}
\end{proof}

\begin{proof}[{\bf Proof of Theorem \ref{Th3}}]
Combining the simulation techniques from Theorem \ref{Query-2QCFA} and  Lemma \ref{Int-2QCFA}, we can use a 2QCFA to simulate a Grover search \cite{Gro96} on the input $z\in\{0,1\}^n$, where $z_i=x_i\wedge y_i$.
Therefore it is clear that there is a 2QCFA recognizing the language
$L_{INT}(n)$.  Since the Grover's algorithm requires ${\bf O}(\sqrt{n})$ queries and uses ${\bf O}(n)$ quantum basis states,  the time used by the 2QCFA is $T={\bf O}(\sqrt{n}\cdot n)={\bf O}(n^{3/2})$. The number of quantum states used by the 2QCFA is ${\bf O}(n)$ and and the number of classical states is ${\bf O}(n^2)$. Therefore, the space used by the 2QCFA is $S={\bf O}(\log n+\log n^2)= {\bf O}(\log n)$. Hence, $TS={\bf O}(n^{3/2}\log n)$.
\end{proof}

\begin{proof}[{\bf Proof of Theorem \ref{Th4}}]
 Similar to the proof of Theorem \ref{Th2} except that probabilistic computation is used  instead of  deterministic  one. The result is based on  $R(\text{INT})= {\bf \Omega}(n)$ \cite{Raz92}.
\end{proof}

\begin{proof}[{\bf Proof of Theorem \ref{Th5}}]
Combining the simulation techniques from Theorem \ref{Query-2QCFA} and Lemma \ref{Int-2QCFA}, we can use a 2QCFA to simulate Ambainis' exact query algorithm in \cite{Amb13} on the input $z\in\{0,1\}^n$, where $z_i=x_i\wedge y_i$. Therefore, there is an exact 2QCFA recognizing the language  $L_{NE}(n)$.  Since the exact algorithm requires ${\bf O}(n^{0.87})$ queries and uses ${\bf O}(n)$ quantum basis states,  the time used by the exact 2QCFA is  $T={\bf O}(n^{0.87}\cdot n)={\bf O}(n^{1.87})$.  The space used is $S={\bf O}(\log n)$. Hence, $TS={\bf O}(n^{1.87}\log n)$.
\end{proof}

\begin{proof}[{\bf Proof of Theorem \ref{Th6}}]
 The proof is similar to that of Theorem \ref{Th2} except that probabilistic computation is used  instead of  deterministic  one. The final result is then based on  $R(\text{RNE})= {\bf \Omega}(n)$ \cite{Amb13}.
\end{proof}

For the proofs of Theorems \ref{Th7},  \ref{Th8} and  \ref{Th9}, it is clear that
the communication protocols can be picked up  from the proofs of Theorems \ref{Th1}, \ref{Th3} and  \ref{Th5}, respectively.  We omit the details of proofs here.




\section{Conclusion and open problems}

Query complexity and communication complexity are related to each other. By using a simulation technique that transforms quantum query algorithms
to quantum communication protocols,  Buhrman et al. \cite{Buh98,Buh09} obtained new quantum communication protocols and showed  the first exponential gap between quantum and classical communication complexity.

In this paper, we have developed the connection among 2QCFA,  quantum communication protocols and quantum query algorithms. We have constructed 2QCFA to simulate quantum query algorithms. Using known quantum query algorithms and quantum communication protocols, this simulation enabled us to prove several time-space tradeoff results for 2QCFA. It also enabled us to find out communication protocols for the case that two parties are using 2QCFA for computation.  It is clear that  if the  protocol is a one-way communication protocol, then the result can be directly transformed to the state complexity result of finite automata. For example, since the protocol in Theorem \ref{Th7} is a one-way protocol,  it is  clear that the following result holds: the space complexity of one-way probability finite automata  for the language $\{x\#y\mid x,y\in\{0,1\}^n, \text{EQ}(x,y)=1\}$ is  ${\bf O}(\log n)$, whereas the space complexity for DFA is ${\bf \Omega}(n)$.
If a one-way quantum communication protocol is transformed from a quantum  query complexity, then it can be implemented on 1QCFA  and the space complexity result will follow immediately. Since we can use known results in quantum query complexity and  communication complexity to derive  new  state succinctness results of quantum finite automata,  the  method is more general than the one used in \cite{Amb98}.

Some  problems for future research:
\begin{enumerate}
  \item The quantum communication complexity tight bound $Q(\text{DISJ})= {\bf \Theta}(\sqrt{n})$ \cite{AA03}. Does there exists a 2QCFA that accepts the language $L_{INT}(n)$ in time $T$ using space $S$ such that
  $TS={\bf O}(n^{3/2})$?

 \item  We have shown,
        for the first time, that the time-space tradeoff $TS$ on exact 2QCFA is superlinearly better than that for 2PFA in recognition of the language $L_{NE}(n)$. Can we find out more languages that 2QCFA have superlinear advantage? Find more examples
such that  exact quantum computing have superlinear advantage in  time-space tradeoff for total functions in other computing models?

  \item We have proved that the exact 2QCFA have superlinear advantage in  time-space tradeoff. Can we prove that exact 2QCFA have  superlinear advantage in time complexity or space complexity in recognizing languages comparing to  2DFA or 2PFA?

 \item We have transformed quantum computing advantages in communication complexity and query complexity to quantum finite automata.  Can we do the opposite way?
  For instant,   Ambainis and Freivalds \cite{Amb98} constructed a quantum finite automaton that is exponentially smaller than equivalent classical automaton, can we transform the problem to communication problem and prove the quantum communication complexity advantage?
\end{enumerate}

\section*{Acknowledgements}
S.G. Zheng  thanks  A.~Ambainis, C.~Mereghetti,  B.~Palano  and L.~Li  for  discussions.


\begin{thebibliography}{ABCD}




 \bibitem{AA03} Aaronson, S.,  and A. Ambainis, Quantum search of spatial regions, In Proceedings of 44th  FOCS (2003) 200--209.

\bibitem{Amb98} A.~Ambainis and R.~Freivalds, One-way quantum finite automata: strengths, weaknesses and generalizations, In Proceedings of the 39th FOCS (1998) 332--341.

\bibitem{Amb02} A. Ambainis and J. Watrous, Two-way finite automata with
quantum and classical states, TCS {\bfseries 287}
(2002) 299--311.

\bibitem{ANTV02} A.~Ambainis,  A.~Nayak, A.~Ta-Shma and U.~Vazirani, Dense quantum coding and quantum finite automata, Journal of the ACM,  {\bfseries 49} (2002) 496--511.

 \bibitem{Amb13}
A. Ambainis, Superlinear advantage for exact quantum algorithms, In  Proceedings of 45th  STOC (2013) 891--900.


\bibitem{BC82}  A.~Borodin and S.~Cook,   A time-space tradeoff for sorting on a general sequential model of computation,  SIAM Journal on Computing {\bfseries 11} (1982) 287--297.


\bibitem{BST98} P.~Beame, M.~Saks and  J.~Thathachar, Time-space tradeoffs for branching programs, in Proceedings of 39th  FOCS (1998) 254--263.


\bibitem{BNS92} L.~Babai, N.~Nisan and M.~Szegedy, Multiparty protocols, pseudorandom generators for logspace, and time-space trade-offs, JCSS {\bfseries 45} (1992) 204--232.

\bibitem{Buh98} H. Buhrman, R. Cleve and A. Wigderson, Quantum vs. classical communication and
computation, In Proceedings of 30th  STOC (1998)  63--68.


  \bibitem{Buh09}H. Buhrman, R. Cleve, S. Massar and R. de Wolf, Nonlocality and Communication Complexity,  Rev. Mod. Phys. {\bfseries 82} (2010) 665--698. Also arXiv:0907.3584 (2009).

  \bibitem{BdW02}
H. Buhrman and R.de Wolf,
Complexity measures and decision tree complexity: a survey,
TCS   {\bfseries 288} (2002) 21--43.


 \bibitem{Cbm66}
A.~Cobham, The recognition problem for the set of perfect squares, IEEE
Conference Record Seventh SWAT (1966) 78-87.


\bibitem{Frd81} R. Freivalds, Probabilistic two-way machines, In MFCS'81, LNCS {\bfseries 188} (1981) 33--45.


 \bibitem{Gro96}
L.~K. Grover,
 A fast quantum mechanical algorithm for database search,
 In Proceedings of 28th  STOC  (1996) 212--219.


  \bibitem{GQZ15}J.~Gruska, D.W.~Qiu, S.G.~Zheng,  Generalizations of the distributed Deutsch-Jozsa promise problem,  Mathematical Structures in Computer Science, doi:10.1017/S0960129515000158, (2015). Also arXiv:1402.7254 (21 pages).



 \bibitem{Hrc05} J. Hromkovi$\check{c}$,  Design and Analysis of Randomized Algorithms, Springer (2005).

\bibitem{HU79}J.E. Hopcroft and J.D. Ullman,  Introduction to Automata Theory, Languages, and
Computation, Addision-Wesley, New York,  (1979).

 \bibitem{Kla00} H. Klauck, On quantum and probabilistic communication: Las Vegas and one-way protocols, In Proceedings of the 32th  STOC(2000) 644-651.

\bibitem{Klk03} H.~Klauck, Quantum time-space tradeoffs for sorting, In Proceedings of 35th
STOC (2003) 69--76.

\bibitem{KSW07} H.~Klauck, R.~{\v{S}}palek and R.~de Wolf,     Quantum and classical strong direct product theorems and optimal time-space tradeoffs, SIAM J. Comput. {\bfseries 36} (2007) 1472--1493.


 \bibitem{KusNis97} E. Kushilevitz, N. Nisan, Communication Complexity,  Cambridge University Press (1997).


  \bibitem{LF15} L.Z.~Li, Y.~Feng,  On hybrid models of quantum finite automata, Journal of Computer and System Science {\bfseries 81} (2015) 1144--1158.  Also arXiv:1206.2131.

  \bibitem{QLMS15} D.W. Qiu, L. Li, P. Mateus,  A. Sernadas, Exponentially more concise quantum recognition of non-RMM languages, Journal of Computer and System Sciences {\bfseries 81} (2) (2015) 359-375. Also arXiv:0909.1428.


 \bibitem{Raz92} A. Razborov,  On the distributional complexity of disjointness, TCS {\bfseries 106} (1992) 385--390.

 \bibitem{Yao79} A. C. Yao, Some Complexity Questions Related to Distributed Computing, In Proceedings of
11th STOC (1979)  209--213.


\bibitem{Yak10} A.~Yakary{\i}lmaz and  A.C.C.~Say, Succinctness of two-way probabilistic
and quantum finite automata,  Discrete Mathematics and Theoretical
Computer Science  {\bfseries 12}  (2010) 19--40.


\bibitem{ZQLG12} S.G.~Zheng, D.W.~Qiu, L.Z.~Li and J.~Gruska,  One-way finite automata with quantum and classical
states,  {\it Dassow Festschrift}, LNCS {\bfseries 7300}  (2012) 273--290. Also arXiv:1112.2022.

\bibitem{ZQG+13} S.G.~Zheng, D.W.~Qiu, J.~Gruska, L.Z.~Li and P.~Mateus, State succinctness of two-way finite automata with quantum and classical
states, Theoretical Computer Science {\bfseries 499} (2013) 98--112. Also arXiv:1202.2651.



\bibitem{ZGQ14} S.G.~Zheng,  J.~Gruska, D.W.~Qiu, On the state complexity of semi-quantum finite automata, {\it RAIRO-Theoretical Informatics and Applications} {\bfseries 48} (2014)  187--207. Earlier version in LATA'14. Also  arXiv:1307.2499.

 \bibitem{ZQG15} S.G.~Zheng, Q.W~.Qiu,    J.~Gruska,  Power of the interactive proof systems with verifiers
                modeled by semi-quantum two-way finite automata, Information and Computation {\bfseries 241} (2015) 194--214.   Also arXiv:1304.3876.





\end{thebibliography}
\end{document}